\documentclass[a4paper,11pt]{article}

\usepackage[T1]{fontenc}
\usepackage{aeguill}
\usepackage[utf8x]{inputenc}
\usepackage[english]{babel}
\usepackage{hyperref}
\usepackage{amsfonts, amsthm, amsmath}
\usepackage{tikz-qtree}

\title{Hofstadter's problem for curious readers}
\author{Pierre letouzey}
\date{\small
{\tt pierre.letouzey@inria.fr}\\
Team $\pi r^2$, PPS, Univ Paris Diderot \& INRIA Paris-Rocquencourt\\
Technical report version 1.5, Apr. 2018, \href{http://creativecommons.org/licenses/by/4.0/}{license CC-BY}
}

\begin{document}
\newtheorem{theorem}{Theorem}
\newtheorem{definition}{Definition}
\maketitle

\newcommand{\docgen}[2]{\href{http://www.pps.univ-paris-diderot.fr/~letouzey/hofstadter_g/doc/#1.html#2}{\tt #1.v}}
\newcommand{\doc}[1]{\docgen{#1}{}}
\newcommand{\doclab}[2]{\docgen{#1}{\##2}}
\newcommand{\FG}{\ensuremath{\overline{G}}}
\newcommand{\fibrest}{\ensuremath{\Sigma F_i}}
\newcommand{\flip}{\textit{flip}}
\newcommand{\depth}{\textit{depth}}

\section{Introduction}
This document summarizes the proofs made during a Coq development in
Summer 2015. This development investigates the function $G$ introduced
by Hofstadter in his famous ``Gödel, Escher, Bach'' book\cite{GEB},
as well as a related infinite tree. The left/right flipped variant
$\FG$ of this $G$ tree has also been studied here, following
Hofstadter's ``problem for the curious reader''.
The initial $G$ function is refered as sequence A005206 in
OEIS\cite{OEIS-G}, while the flipped version $\FG$ is the sequence
A123070 \cite{OEIS-FG}.

The detailed and machine-checked proofs can be found in the files
of this development\footnote{See
\url{http://www.pps.univ-paris-diderot.fr/~letouzey/hofstadter_g}}
and can be re-checked by running Coq \cite{Coq} version 8.4 on it.
No prior knowledge of Coq is assumed here, on the contrary this
document has rather been a ``Coq-to-English'' translation
exercise for the author. Nonetheless, some proofs given in this
document are still quite sketchy: in this case, the interested
reader is encouraged to consult the Coq files given as references.

\section{Prior art and contributions}
Most of the results proved here were already mentioned elsewhere,
e.g. on the OEIS pages for $G$ and $\FG$. These results were certified
in Coq without consulting the existing proofs in the literature, so
the proofs presented here might still be improved. To the best of
my knowledge, the main novelties of this development are:
\begin{enumerate}
\item
  A proof of the recursive
equation for $\FG$ which is currently mentioned as a conjecture in OEIS:

$$\forall n>3, \FG(n) = n+1 - \FG(1+\FG(n-1))$$

\item
  The statement and proof of another equation for $\FG$:

$$\forall n>3, \FG(\FG(n)) + \FG(n-1) = n$$

Although simpler than the previous one, this equation isn't
enough to characterize $\FG$ unless an extra monotonocity
condition on $\FG$ is assumed.

\item
  The statement and proof of a result comparing $\FG$ and $G$:
for all $n$, $\FG(n)$ is either $G(n)+1$ or $G(n)$, depending
of the shape of the Zeckendorf decomposition of $n$ as a sum
of Fibonacci numbers. More precisely, when this decomposition
of $n$ starts with $F_3$ and has a second smallest term $F_k$
with $k$ odd, then $\FG(n)=G(n)+1$, otherwise $\FG(n)=G(n)$.
Moreover these
specific numbers where $\FG$ and $G$ differ are separated by either
5 or 8.

\item The studies of $G$ and $\FG$ ``derivatives''
  ($\Delta G(n) = G(n+1)-G(n)$ and similarly for $\Delta\FG$),
  leading to yet another characterization of $G$ and $\FG$.

\end{enumerate}

\section{Coq jargon}
For readers without prior knowledge
of Coq but curious enough to have a look at the actual files
of this development, here comes a few words about the Coq syntax.
\begin{itemize}
\item By default, we're manipulating natural numbers only, corresponding
 to Coq type {\tt nat}. Only exception: in file \doc{Phi},
 we switch to real numbers to represent the golden ratio.
\item The symbol {\tt S} denotes the successor of natural numbers.
  Hence {\tt S($2$)} is the same as $3$.
\item As in many modern functional languages such as OCaml or Haskell,
  the usage is to skip parenthesis whenever possible, and
  in particular around atomic arguments of functions. Hence
  {\tt S(x)} will rather be written {\tt S x}.
\item The symbol {\tt pred} is the predecessor for natural numbers.
  In Coq, all functions must be total, and {\tt pred $0$} has been
  chosen to be equal to $0$. Similarly, the subtraction on
  natural numbers is rounded: {\tt $3$ - $5$ = $0$}.
\item Coq allows to define custom predicates to express various
  properties of numbers, lists, etc. These custom predicates are
  introduced by the keyword {\tt Inductive}, followed by some
  ``introduction'' rules for the new predicate. In this development
  we use capitalized names for these predicates (e.g. {\tt Delta},
  {\tt Low}, ...).
\item Coq also accepts new definition of recursive functions via
  the command {\tt Fixpoint}. But these definitions should satisfy
  some criterion to guarantee that the new function is well-defined
  and total. So in this development, we also need sometimes to
  define new functions via explicit justification of termination,
  see for instance {\tt norm} in \doclab{Fib}{norm}\ or {\tt g\_spec} in
  \doclab{FunG}{g\_spec}.
\end{itemize}

\section{Fibonacci numbers and decompositions}

This section corresponds to file \doc{Fib}.

\paragraph{Convention.} We use here\footnote{
Please note that earlier versions of this Coq development
and of this document were using a non-standard definition
of the Fibonacci sequence, where the leading 0 was omitted.
This non-standard definition $F_0=F_1=1, F_2=2, ...$ was meant
to ease some Coq proofs, but was actually providing little gain,
while being quite confusing for external readers.}
the following definition of the Fibonacci numbers $F_n$ :
$$F_0 = 0 $$
$$F_1 = 1 $$
$$\forall n,~~ F_{n+2} = F_{n}+F_{n+1}$$
This definition is standard, see for instance OEIS's sequence A000045
\cite{OEIS-Fib}.
In Coq, these $F_n$ numbers correspond to the {\tt fib}
function, and we start by proving a few basic properties :
strict positivity except for $F_0$, monotony, strict monotony
above 2, etc. 
We also prove the theorem {\tt fib\_inv} which states 
that any positive number can by bounded by
consecutive Fibonacci numbers :

\begin{theorem}[Fibonacci inverse]\label{fibinv}
$\forall n, \exists k, F_k \le n < F_{k+1}$.
\end{theorem}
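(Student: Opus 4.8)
The plan is to prove the statement by induction on $n$, exhibiting for each $n$ the (essentially unique) index $k$ of the half-open Fibonacci interval $[F_k, F_{k+1})$ that contains $n$. I would first dispatch the smallest cases directly: for $n=0$ the witness $k=0$ works, since $F_0 = 0 \le 0 < 1 = F_1$, and for $n=1$ the witness $k=2$ works, since $F_2 = 1 \le 1 < 2 = F_3$. Treating $n=1$ as a separate base case is what lets me sidestep the only real annoyance here, namely the repeated value $F_1 = F_2 = 1$, which makes the interval $[F_1,F_2)$ empty and thus unusable as a witness.

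For the inductive step I assume the property for some $n \ge 1$, so that I have an index $k$ with $F_k \le n < F_{k+1}$, and I want to produce an index suitable for $n+1$. Before anything else I would observe that this $k$ must satisfy $k \ge 2$: if $k=0$ the bounds force $n=0$, and if $k=1$ they force $n<F_2=1$, both contradicting $n\ge 1$. This lower bound on $k$ is exactly what is needed to guarantee strict growth at the relevant place. From $F_k \le n < F_{k+1}$ I get $F_k \le n+1 \le F_{k+1}$, and I split on this last inequality.

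If $n+1 < F_{k+1}$, the same index $k$ still works for $n+1$. Otherwise $n+1 = F_{k+1}$, and I claim $k' = k+1$ works: the lower bound $F_{k+1} \le n+1$ is immediate, while for the upper bound I use the defining recurrence $F_{k+2} = F_k + F_{k+1}$ to write $n+1 = F_{k+1} < F_k + F_{k+1} = F_{k+2}$, the strict inequality holding because $F_k \ge F_2 = 1 > 0$ once $k \ge 2$. This closes the induction.

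The one genuinely delicate point, and the reason the slogan \emph{``the sequence is strictly increasing and unbounded''} does not apply verbatim, is the duplicated value $F_1=F_2$: a boundary value $n+1$ landing exactly on a Fibonacci number must be pushed up to the next index, and that step is valid only when $F_k>0$, i.e.\ once $k\ge 2$. Isolating $n=1$ as a base case and deriving $k\ge 2$ in the inductive step both serve to confine this subtlety; an alternative route would instead take the least $k$ with $n<F_k$, but the incremental induction is cleaner and more directly mechanizable. The only external facts I rely on are the defining recurrence and the strict positivity of $F_k$ for $k \ge 1$, both of which are among the basic properties the excerpt says have already been established.
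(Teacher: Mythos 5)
Your proof is correct. The paper itself states this theorem without any prose proof (it defers entirely to the Coq lemma {\tt fib\_inv}), so there is no written argument to compare against; your incremental induction on $n$ --- with $n=1$ isolated as a base case and the derived bound $k\ge 2$ used to step past the degenerate empty interval $[F_1,F_2)$ coming from $F_1=F_2=1$ --- is exactly the kind of argument that mechanizes directly, and the only facts you invoke (the recurrence and positivity of $F_k$ for $k\ge 1$) are among the basic properties the paper says it establishes beforehand.
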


\subsection{Fibonacci decompositions} The rest of the file \doc{Fib}\
deals with the decompositions of numbers as sums of Fibonacci
numbers.
\begin{definition}
 A decomposition $n = \fibrest$ is said to be \emph{canonical} if:
\begin{itemize}
\item[(a)] $F_0$ and $F_1$ do not appear in the decomposition
\item[(b)] A Fibonacci number appears at most once in the decomposition
\item[(c)] Fibonacci numbers in the decomposition aren't consecutive
\end{itemize}
A decomposition is said to be \emph{relaxed} if at least conditions
(a) and (b) hold. 
\end{definition}
For instance, $11$ admits the canonical decomposition
$F_4+F_6 = 3 + 8$, but also the relaxed decomposition $F_2+F_3+F_4+F_5
= 1 + 2 + 3 + 5$.
On a technical level, we represented in Coq the decompositions
as sorted lists of ranks, and we used a predicate {\tt Delta $p$ $l$}
to express that any element in the list $l$ exceeds the previous
element by at least $p$. We hence use $p=2$ for canonical
decompositions, and $p=1$ for relaxed ones. See file
\doc{DeltaList}\ for the definition and properties of this
predicate.

Then we proved Zeckendorf's theorem (actually discovered earlier by Lekkerkerker):

\begin{theorem}[Zeckendorf]\label{zeck}
Any natural number has a unique canonical decomposition.
\end{theorem}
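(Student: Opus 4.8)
The plan is to prove Zeckendorf's theorem by establishing existence and uniqueness separately, both by strong induction on $n$.

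For existence, I would induct on $n$. The base cases $n=0$ (the empty decomposition) and small values are handled directly. For the inductive step with $n>0$, I invoke Theorem~\ref{fibinv} to obtain the unique $k$ with $F_k \le n < F_{k+1}$. Since $n>0$ we have $k \ge 2$, so $F_k$ is a legitimate term. Set $m = n - F_k$, so $m < F_{k+1} - F_k = F_{k-1}$. By the induction hypothesis, $m$ has a canonical decomposition, and every Fibonacci number appearing in it is strictly less than $F_{k-1}$, hence has rank at most $k-2$. Prepending $F_k$ therefore keeps the ranks at least $2$ apart (condition (c)), does not reintroduce $F_0$ or $F_1$ (condition (a)), and introduces no repetition (condition (b)), giving a canonical decomposition of $n$.

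For uniqueness, the key lemma is that \emph{the largest term of any canonical decomposition of $n$ is forced}: if $n = F_k + (\text{rest})$ is canonical with largest rank $k$, then $F_k \le n < F_{k+1}$. The upper bound is the crux: the sum of the remaining terms, whose ranks are all $\le k-2$ and pairwise non-consecutive, is a sum of non-consecutive Fibonacci numbers each below $F_{k-1}$, and such a sum is strictly less than $F_{k-1}$. This gives $n = F_k + (\text{rest}) < F_k + F_{k-1} = F_{k+1}$. By the uniqueness in Theorem~\ref{fibinv}, the top rank $k$ is determined by $n$ alone; stripping it off reduces to the canonical decomposition of $m = n - F_k < F_{k-1}$, whose ranks are all $\le k-2$, and the induction hypothesis forces the remainder to be unique as well.

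The main obstacle is the bounding lemma underlying uniqueness, namely that a canonical (gap-$\ge 2$) sum of distinct Fibonacci numbers with maximal rank $k$ stays strictly below $F_{k+1}$. The cleanest route is a separate induction showing that for any canonical decomposition whose largest rank is at most $j$, the total is strictly less than $F_{j+1}$; the non-consecutiveness condition (c) is exactly what makes this bound tight and is where the argument would fail for merely relaxed decompositions. In the Coq formalization this is presumably where the \texttt{Delta}~$2$ predicate does its work, so I expect the bulk of the effort to lie in manipulating that predicate and its interaction with the Fibonacci recurrence, rather than in the overall induction structure.
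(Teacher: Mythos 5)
Your proof is correct and follows essentially the same route as the paper: existence via the greedy choice $F_k \le n < F_{k+1}$ from Theorem~\ref{fibinv}, and uniqueness via the key bounding lemma that a canonical sum whose largest rank is $k$ stays strictly below $F_{k+1}$. One small nitpick: Theorem~\ref{fibinv} as stated only asserts \emph{existence} of $k$, so the uniqueness of the top rank you invoke should instead be derived from strict monotonicity of the Fibonacci numbers above rank $2$ (which the paper does have available).
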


\begin{proof}
The proof of this theorem is quite standard : for $n=0$ we take
the empty decomposition (and not the useless $F_0$), while for
$n>0$, we take first the highest $k$ such that $F_k$ is less or equal to $n$
(cf {\tt fib\_inv} below),
and continue recursively with $n-F_k$. We cannot obtain this way
two consecutive $F_{k+1}$ and $F_k$, otherwise $F_{k+2}$ could have
been used in the first place. This process will never use $F_1$,
since $F_2$ has the same value for a higher rank, and will hence
be preferred.
For proving the uniqueness, the key ingredient is that this kind of
sum cannot exceed the next Fibonacci number. For instance
$F_2+F_4+F_6 = 1+3+8 = 12 = F_7 - 1$.
\end{proof}

\begin{definition}
For a non-empty decomposition, its \emph{lowest rank} is the rank of
the lowest Fibonacci number in this decomposition. By extension,
the lowest rank $low(n)$ of a number $n\neq 0$ is the lowest
rank of its (unique) canonical decomposition.
\end{definition}
For instance $low(11)=4$. In Coq, $low(n)=k$ is written
via the predicate {\tt Low 2 $n$ $k$}. Note that this notion is
in fact more general : {\tt Low 1 $n$ $k$} says that $n$ can be
decomposed via a relaxed decomposition of lowest rank $k$.

Interestingly, a relaxed decomposition can be transformed into
a canonical one (see Coq function {\tt norm}):

\begin{theorem}[normalization]\label{norm}
Consider a relaxed decomposition of $n$, made of $k$ terms and
whose lowest rank is $r$. We can build a canonical decomposition
of $n$, made of no more than $k$ terms, and whose lowest rank can
be written $r+2p$ with $p\ge 0$.  
\end{theorem}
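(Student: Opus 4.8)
The plan is to reduce the whole normalization to a single ``carry'' operation — adding one Fibonacci number below an already-canonical decomposition — and then to drive the normalization by peeling off the smallest term of the relaxed decomposition. Following the paper's representation, I picture a decomposition as its strictly increasing list of ranks: relaxed means consecutive ranks differ by at least $1$ (all ranks $\ge 2$), canonical means they differ by at least $2$. The only algebraic fact I will ever use is the Fibonacci recurrence in the form $F_r + F_{r+1} = F_{r+2}$, which lets me merge a pair of consecutive ranks into a single higher rank.

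First I would isolate the key sublemma, the \emph{carry lemma}: if $C$ is a canonical decomposition of some $m$ with $j$ terms and lowest rank $s$ (or $C$ empty), and $r$ is a rank with $2 \le r < s$, then $m + F_r$ admits a canonical decomposition with at most $j+1$ terms whose lowest rank has the form $r + 2p$, $p \ge 0$. I would prove this by induction on $j$ (equivalently, structural induction on the rank list of $C$). If $C$ is empty or $s \ge r+2$, I simply prepend rank $r$: the result is canonical, has at most $j+1$ terms, and its lowest rank is $r = r + 2\cdot 0$. The interesting case is $s = r+1$: here $F_r + F_{r+1} = F_{r+2}$, so I drop both lowest terms and must instead insert rank $r+2$ into the remaining list $C'$, whose lowest rank is $\ge r+3 > r+2$ and which has $j-1$ terms. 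The induction hypothesis applied to $C'$ and the rank $r+2$ then yields a canonical decomposition with at most $j$ terms and lowest rank $(r+2) + 2p' = r + 2(p'+1)$, exactly of the required shape. The merge case strictly decreases the term count, so the bound is comfortably preserved.

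With the carry lemma in hand, the theorem follows by strong induction on $n$. Given a relaxed decomposition of $n$ with ranks $r = a_1 < a_2 < \cdots < a_k$, I remove the smallest term $F_r$; the remaining ranks $a_2 < \cdots < a_k$ form a relaxed decomposition of $n - F_r < n$ with $k-1$ terms and lowest rank $a_2 \ge r+1$. By the induction hypothesis this has a canonical decomposition with at most $k-1$ terms and lowest rank $a_2 + 2q \ge a_2 > r$. Since that lowest rank strictly exceeds $r$, the carry lemma applies to reinsert $F_r$, producing a canonical decomposition of $n$ with at most $(k-1)+1 = k$ terms and lowest rank $r + 2p$. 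The base case $n = 0$ gives the empty decomposition, and the case $k=1$ is already canonical, so both are immediate.

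The main obstacle is entirely contained in the carry lemma's consecutive case: reinserting $F_r$ can trigger a cascade of merges $F_r+F_{r+1}=F_{r+2}$, then $F_{r+2}+F_{r+3}=F_{r+4}$, and so on up the list. It is this cascade that simultaneously explains the two nontrivial parts of the conclusion — the lowest rank climbs in steps of exactly $2$ (hence the $r + 2p$ shape, and in particular the preserved parity), while every merge consumes two terms to create one, so the term count never increases. Checking that the cascade terminates and that the running decomposition stays canonical above the point of insertion is the only place real care is needed; organizing it as an induction on the number of terms, as above, keeps the bookkeeping honest and avoids any explicit termination measure.
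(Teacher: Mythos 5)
Your proof is correct, but it is organized quite differently from the paper's. The paper normalizes by a global rewriting process: repeatedly merge the \emph{highest} consecutive pair $F_m+F_{m+1}\to F_{m+2}$, argue termination from the decreasing term count, and justify correctness through an invariant on the whole process (no duplicate term can appear, ``otherwise we would have merged $F_{m+1}+F_{m+2}$ instead''), with the lowest rank left intact or raised by $2$ at each step. You instead work bottom-up and structurally: your carry lemma (inserting a single $F_r$ below an already-canonical decomposition, proved by induction on the number of terms) isolates the cascade of merges, and the theorem follows by peeling off the smallest term and reinserting it. The trade-off is real: the paper's process matches how one normalizes by hand and makes the ``$+0$ or $+2$ per step'' observation immediate, but its correctness rests on a somewhat delicate invariant about all intermediate decompositions staying relaxed; your version replaces that global invariant with local inductive steps where canonicity, the term-count bound, and the $r+2p$ parity all fall out of the induction hypothesis — at the cost of having to formulate the auxiliary lemma with the right strict hypothesis $r<s$. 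Your structure is also the more formalization-friendly of the two (it is essentially the insert-with-carry recursion one would write for a \texttt{norm} function in Coq), whereas the paper's prose argument would need an explicit termination measure and invariant if mechanized as stated.
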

\begin{proof}
We simply eliminate the highest consecutive Fibonacci numbers (if any)
by summing them : $F_m+F_{m+1} \to F_{m+2}$, and repeat this
process. By dealing with highest numbers first, we avoid the apparition of
duplicated terms in the decomposition: $F_{m+2}$ wasn't already
in the decomposition, otherwise we would have considered
$F_{m+1}+F_{m+2}$ instead. Hence all the obtained decompositions
during the process are correct relaxed decompositions. The number of
terms is decreasing by 1 at each step, so the process is guaranteed to
terminate, and will necessarily stops on a decomposition with no
consecutive Fibonacci numbers : we obtain indeed the canonical
decomposition of $n$. And finally, it is easy to see that the
lowest rank is either left intact or raised by 2 at each step
of the process.
\end{proof}

\subsection{Classifications of Fibonacci decompositions.}
The end of \doc{Fib}\ deals with classifications of numbers
according to the lowest rank of their canonical decomposition.
In particular, this lowest rank could be 2, 3 or more. It will
also be interesting to distinguish between lowest ranks that
are even or odd. These kind of classifications and their
properties will be heavily used
during theorem \ref{comp-fg-g} which compares $\FG$ and $G$.
For instance:
\begin{theorem}[decomposition of successor]\label{fibsucc}
\noindent
\begin{enumerate}
\item $low(n) = 2$ implies $low(n+1)$ is odd.
\item $low(n) = 3$ implies $low(n+1)$ is even and different from 2.
\item $low(n) > 3$ implies $low(n+1) = 2$.
\end{enumerate}
\end{theorem}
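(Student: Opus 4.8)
The plan is to reduce all three cases to the normalization theorem (Theorem~\ref{norm}) by first exhibiting, in each case, a \emph{relaxed} decomposition of $n+1$ with a controlled lowest rank, and then reading off the parity directly from the ``$r+2p$'' clause of normalization. The three small identities that drive the argument are $1 = F_2$, $F_2 + 1 = F_3$ and $F_3 + 1 = F_4$, all immediate from $F_2 = 1$, $F_3 = 2$, $F_4 = 3$.

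First I would dispatch case (3). If $low(n) > 3$, the canonical decomposition of $n$ uses only ranks $\ge 4$, so neither $F_2$ nor $F_3$ occurs. Writing $n+1 = F_2 + n$ and prepending $F_2$ to the decomposition of $n$ yields a list whose two lowest ranks are $2$ and some rank $\ge 4$; since their gap is at least $2$, no two terms are consecutive and the result already satisfies all three canonical conditions. By uniqueness (Theorem~\ref{zeck}) it \emph{is} the canonical decomposition of $n+1$, hence $low(n+1) = 2$.

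For cases (1) and (2) the lowest term must instead be incremented. If $low(n) = 2$, write $n = F_2 + R$, where $R$ collects the remaining terms, all of rank $\ge 4$; then $n+1 = (F_2 + 1) + R = F_3 + R$. This is a relaxed decomposition — conditions (a) and (b) hold, since the ranks lie in $\{3\}$ together with ranks $\ge 4$ and are pairwise distinct — whose lowest rank is $3$. Normalization then produces the canonical decomposition of $n+1$ with lowest rank $3 + 2p$, which is odd. Symmetrically, if $low(n) = 3$, write $n = F_3 + R$ with the terms of $R$ of rank $\ge 5$; then $n+1 = (F_3 + 1) + R = F_4 + R$ is relaxed with lowest rank $4$, and normalization gives a canonical lowest rank $4 + 2p$, which is even and at least $4$, hence different from $2$.

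The one point demanding care is the verification, in each case, that the modified list really is a relaxed decomposition: in particular that incrementing the lowest term does not collide with the next term (it cannot, since that term has rank at least two above the old lowest, so $F_3$ stays below $R$ in case (1) and $F_4$ below $R$ in case (2)), and that the degenerate situations where $R$ is empty ($n = 1$ and $n = 2$) behave as stated. Beyond this bookkeeping the parity conclusions are handed to us directly by Theorem~\ref{norm}, so no separate induction on the carry cascade is required.
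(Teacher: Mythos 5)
Your proof is correct and follows essentially the same route as the paper's: replace the lowest term using $F_2+1=F_3$ and $F_3+1=F_4$ (or prepend $F_2$ when $low(n)>3$), then invoke the normalization theorem's ``$r+2p$'' clause for the parity in cases (1) and (2), and uniqueness of the canonical decomposition in case (3). The only difference is that you spell out the bookkeeping (relaxedness of the modified list, empty-$R$ edge cases) that the paper leaves implicit.
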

\begin{proof}
Let $r$ be $low(n)$. We can write $n = F_r + \fibrest$ in a canonical
way.
\begin{enumerate}
\item If $r = 2$, then $n+1 = F_3 + \fibrest$ is a relaxed decomposition,
  and we conclude thanks to the previous normalization theorem.
\item If $r = 3$, then $n+1 = F_4 + \fibrest$ is a relaxed decomposition,
  and we conclude similarly.
\item If $r > 3$, then $n+1 = F_2 + F_r + \fibrest$ is directly a
  canonical decomposition.
\end{enumerate}
\end{proof}

We also studied the decomposition of the predecessor
$n-1$ in function of $n$. This decomposition depends of the parity of
the lowest rank in $n$, since for $r\neq 0$ we have:

$$ F_{2r} - 1 = F_3 + ... + F_{2r-1}$$
$$ F_{2r+1} - 1 = F_2 + F_4 + ... + F_{2r}$$

Hence:

\begin{theorem}[decomposition of predecessor]\label{fibpred}
\noindent
\begin{enumerate}
\item If $low(n)$ is odd then $low(n-1) = 2$.
\item If $low(n)$ is even and different from 2, then $low(n-1) = 3$.
\item For $n>1$, if $low(n)=2$ then $low(n-1)>3$.
\end{enumerate}
\end{theorem}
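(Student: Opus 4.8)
The plan is to mirror the structure of the preceding successor theorem. In each case I would write the canonical decomposition $n = F_r + \fibrest$, where $r = low(n)$ and, by non-consecutiveness, every rank occurring in $\fibrest$ is at least $r+2$. Then I would read off the decomposition of $n-1 = (F_r-1) + \fibrest$ by substituting for $F_r - 1$ one of the two telescoping identities displayed just above the statement. The observation that makes every case clean is that both identities express $F_r-1$ as a sum of Fibonacci numbers whose ranks lie strictly below $r$, with top rank exactly $r-1$. Since $\fibrest$ resumes only at rank $\geq r+2$, the gap between $r-1$ and $r+2$ is $3\geq 2$, so splicing the two pieces together creates neither consecutive nor duplicated ranks: the resulting decomposition of $n-1$ is automatically canonical, and its lowest rank is just the lowest rank of the substituted sum.

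For item (1), $low(n)=r$ odd forces $r\geq 3$; writing $r=2t+1$ and substituting $F_{2t+1}-1 = F_2+F_4+\cdots+F_{2t}$ yields $n-1 = F_2+F_4+\cdots+F_{2t}+\fibrest$, canonical with lowest term $F_2$, so $low(n-1)=2$. For item (2), $low(n)=r$ even and $\neq 2$ forces $r\geq 4$; writing $r=2t$ and substituting $F_{2t}-1 = F_3+F_5+\cdots+F_{2t-1}$ yields $n-1 = F_3+\cdots+F_{2t-1}+\fibrest$, canonical with lowest term $F_3$, so $low(n-1)=3$. For item (3), the hypothesis $low(n)=2$ gives $n = F_2+\fibrest = 1+\fibrest$, so here I need no identity at all: $n-1 = \fibrest$ is the tail of the canonical decomposition of $n$, hence itself canonical, with lowest rank $\geq r+2 = 4 > 3$. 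The role of the hypothesis $n>1$ is precisely to guarantee that $\fibrest$ is non-empty so that $low(n-1)$ is defined; the excluded case $n=1$ has $n-1=0$ with empty decomposition.

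I expect the main obstacle to be bookkeeping rather than anything conceptual. First, the degenerate smallest instances must be checked to be consistent with the general formula: $r=3$ collapses the odd-index sum to the single term $F_3-1=F_2$, and $r=4$ collapses the even-index sum to $F_4-1=F_3$. Second, in Coq the junction argument of the first paragraph has to be made fully rigorous, i.e.\ one must prove that prepending the explicit list of low ranks to the {\tt Delta}-structured tail again yields a valid {\tt Delta 2} list. The two telescoping identities themselves are a routine induction on the defining recurrence and can be taken as established.
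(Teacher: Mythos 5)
Your proposal is correct and follows essentially the same route as the paper: both substitute the telescoping identities $F_{2t+1}-1 = F_2+F_4+\cdots+F_{2t}$ and $F_{2t}-1 = F_3+F_5+\cdots+F_{2t-1}$ into the canonical decomposition $n = F_r + \fibrest$ for items (1) and (2), and for item (3) observe that $n-1=\fibrest$ is directly the (non-empty, by $n>1$) canonical tail with lowest rank at least $4$. Your explicit gap argument ($r-1$ versus $r+2$) justifying that the spliced decomposition stays canonical is exactly the detail the paper leaves implicit.
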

\begin{proof}
\noindent
\begin{enumerate}
\item Let the canonical decomposition of $n$ be $F_{2r+1} +
  \fibrest$. Then we have $n-1 = F_2 + ... + F_{2r} + \fibrest$, which
 is also a canonical decomposition. Moreover $r\neq 0$ since $F_1$
 isn't allowed in decompositions. Hence the decomposition of $n-1$
 contains at least the term $F_2$, and $low(n-1)=2$.
\item When $low(n) = 2r$ with $r\neq 0$, we decompose
 similarly $F_{2r}-1$, leading to a lowest rank 3 for $n-1$.
\item Finally, when $n = F_2 + \fibrest$, then the canonical decomposition
  of $n-1$ is directly the rest of the decomposition of $n$, which
  isn't empty (thanks to the condition $n>1$) and
  cannot starts by $F_2$ nor $F_3$ (for canonicity reasons).
\end{enumerate}
\end{proof}

\subsection{Subdivision of decompositions starting by $F_3$}

When $low(n)=3$ and $n>2$, the canonical decomposition of $n$
admits at least a second-lowest term: $n = F_3 + F_k + ...$ and we
will sometimes
need to consider the parity of this second-lowest rank.

\begin{definition}
\noindent
\begin{itemize}
\item A number $n$ is said \emph{3-even} when its canonical decomposition
  is of the form $F_3 + F_{2p} + ...$.
\item A number $n$ is said \emph{3-odd} when its canonical decomposition
  is of the form $F_3 + F_{2p+1} + ...$
\end{itemize}
\end{definition}
In Coq, this corresponds to the {\tt ThreeEven} and {\tt ThreeOdd}
predicates.
In fact, the 3-odd numbers will precisely be the locations where
$G$ and $\FG$ differs (see theorem \ref{comp-fg-g}). The first of these 3-odd numbers is
$7 = F_3+F_5$, followed by $15 = F_3 + F_7$ and $20 = F_3+F_5+F_7$.

A few properties of 3-even and 3-odd numbers:
\begin{theorem}\label{threeevenodd}
\noindent
\begin{enumerate}
\item A number $n$ is 3-odd if and only if it admits
a \emph{relaxed} decomposition of the form $F_3 + F_{2p+1}+...$.
\item A number $n$ is 3-odd if and only if $low(n)=3$ and
$low(n-2)$ is odd.
\item A number $n$ is 3-even if and only if $low(n)=3$ and
$low(n-2)$ is even.
\item When $low(n)$ is even and at least 6, then $n-1$ is 3-odd.
\item Two consecutive 3-odd numbers are always apart by 5 or 8.
\end{enumerate}
\end{theorem}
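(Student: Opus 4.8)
The plan is to treat the five items in order, since the later ones lean on the earlier ones, and to reduce everything to the single arithmetic fact that $F_3 = 2$, so that stripping the leading $F_3$ off a canonical decomposition with $low(n)=3$ turns $n$ into $n-2$. For items 2 and 3 I would write the canonical decomposition of a number $n>2$ with $low(n)=3$ as $n = F_3 + \fibrest$, where $\fibrest$ collects the terms of rank $\ge 5$. Since $F_3 = 2$, the list $\fibrest$ is itself a canonical decomposition, namely that of $n-2$, so $low(n-2)$ equals the second-lowest rank of $n$. Reading off the parity then yields both equivalences at once. For item 1 the forward direction is immediate (a canonical decomposition is relaxed); for the converse, given a relaxed decomposition $F_3 + F_{2p+1} + \ldots$ I note that the second rank satisfies $2p+1 \ge 5$ because the two lowest terms must be distinct, so $n-2$ has the relaxed decomposition $F_{2p+1}+\ldots$ of odd lowest rank. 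Normalization (Theorem \ref{norm}) raises the lowest rank by an even amount, so $low(n-2)$ stays odd and $\ge 5$; re-adding $F_3$, which sits two ranks below, produces the canonical decomposition of $n$ with lowest rank $3$ and an odd second rank, i.e. $n$ is 3-odd by item 2.

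For item 4, with $low(n)=2r$ and $2r \ge 6$ (hence $r \ge 3$) I would use the stated identity $F_{2r}-1 = F_3 + F_5 + \cdots + F_{2r-1}$ to rewrite $n-1 = F_3 + F_5 + \cdots + F_{2r-1} + \fibrest$. The inserted terms are pairwise non-consecutive and the gap up to $\fibrest$ (whose terms are $\ge F_{2r+2}$) is at least two, so this is already canonical with lowest rank $3$ and second rank $5$; hence $n-1$ is 3-odd. The hypothesis $2r \ge 6$ is exactly what guarantees that $F_5$ actually appears.

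Item 5 is the real work. My plan is to show that for every 3-odd $n$ the next 3-odd number is either $n+5$ or $n+8$, splitting on the second-lowest rank $s$ (odd, $\ge 5$) of $n$. If $s \ge 7$ there is room below $F_s$, so I can write out $n+1,\ldots,n+5$ by hand: their canonical forms begin with $F_4$, $F_2$, $F_5$, $F_2$, and finally $F_3+F_5+F_s+\cdots$, giving low-ranks $4,2,5,2,3$ with only the last number 3-odd, so the next 3-odd is exactly $n+5$. If instead $s=5$, I write $n = F_3+F_5+L$ with $L$ of lowest rank $\ge 7$; then $n+8 = F_3+F_5+F_6+L$ is a relaxed decomposition of the form $F_3+F_5+\cdots$, hence 3-odd by item 1. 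To show that none of $n+1,\ldots,n+7$ is 3-odd I would walk up using Theorem \ref{fibsucc}: a number of low-rank $2$ is followed by one of odd low-rank, a number of low-rank $>3$ by one of low-rank $2$, and a number of low-rank $3$ by one of even low-rank $\ge 4$. Propagating these rules from $low(n)=3$ shows that $n+1,n+4$ have even low-rank and $n+2,n+5,n+7$ have low-rank $2$, so all five are disqualified, leaving only $n+3$ and $n+6$ as candidates.

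These two remaining cases are the crux and the main obstacle. I must check that $n+3$ has low-rank $3$ but an \emph{even} second rank (so it is 3-even, not 3-odd) and that $n+6$ has low-rank $\ge 5$; both follow from tracking a carry $F_6+F_7\to F_8\to\cdots$ up through $L$ via normalization, which preserves the fact that $F_3$ stays lowest in the first case and keeps the lowest rank $\ge 5$ in the second. The difficulty is precisely this $s=5$ bookkeeping, where adding a small constant interacts with the top part $L$ of the decomposition and forces a short case split on $low(L)$ (empty, $=7$, or $\ge 8$). Once $n+1,\ldots,n+7$ are excluded and $n+8$ is confirmed 3-odd, the two branches give gaps $5$ and $8$, so consecutive 3-odd numbers are always apart by $5$ or $8$.
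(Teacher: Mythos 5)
Your proof is correct, and while it shares the paper's overall architecture (the same ordering of the five items, the same reliance on Theorems \ref{norm} and \ref{fibsucc}, and the same $5$-versus-$8$ dichotomy on the second-lowest rank), it diverges in ways worth recording. For items 2 and 3 the paper argues in the opposite direction: it starts from the canonical decomposition $n-2 = F_{2p+1}+\fibrest$ and reconstructs $n$, which forces a case analysis ruling out $p=0$ and $p=1$ (the latter by deriving an even lowest rank for $n$, contradicting $low(n)=3$). Your argument --- strip the leading $F_3=2$ from the canonical decomposition of $n$ and invoke Zeckendorf uniqueness (Theorem \ref{zeck}) to recognize the remainder as \emph{the} canonical decomposition of $n-2$ --- yields both equivalences at once with no case split, and is genuinely simpler; it also makes your item 1 a clean strip--normalize--reattach reduction to item 2. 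For item 5, your explicit canonical forms for $n+1,\dots,n+5$ in the second-rank-$\ge 7$ case (low ranks $4,2,5,2$, then 3-odd) are tidier than the paper's term-by-term normalizations, and your relaxed decomposition $n+8=F_3+F_5+F_6+L$ avoids the paper's sub-case split on the third term of $n$. One presentational caveat: in the $s=5$ case, propagation of Theorem \ref{fibsucc} alone does \emph{not} disqualify $n+5$ and $n+7$ as you state --- concluding $low(n+5)=2$ needs $low(n+3)=3$, and $low(n+7)=2$ needs $low(n+6)>3$, which are exactly your two ``crux'' facts --- so the propagation must be interleaved with those direct verifications rather than completed beforehand; since you do establish both crux facts, the argument closes. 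Note finally that the paper gets away with less here: it never decides whether $n+5$ is 3-odd when $s=5$, since a gap of $5$ would still satisfy ``5 or 8'', whereas your version proves the sharper fact that the gap is exactly $8$ in that case.
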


\begin{proof}
\noindent
\begin{enumerate}
\item
  The left-to-right implication is obvious, since a canonical
  decomposition can in particular be seen as a relaxed one. Suppose
  now the existence of such a relaxed decomposition. During its
  normalization, the $F_3$ will necessarily be left intact, and the
  term $F_{2p+1}$ can grow, but only by steps of 2.
\item Once again, the left-to-right implication is
  obvious. Conversely, we start with the canonical decomposition of
  $n-2=F_{2p+1}+\fibrest$. If $p>1$, this leads to the desired canonical
  decomposition of $n$ as $F_2+F_{2p+1}+\fibrest$. And the case $p\le 1$ is
  impossible: $p=0$ leads to $F_1$ being part of a relaxed
  decomposition, and if we assume $p=1$, then we could turn the
  decomposition of $n-2$ into a canonical decomposition of $n-4$
  whose lowest rank is at least 5. This gives us a relaxed
  decomposition of $n$ of the form $F_2+F_4+\fibrest$. After normalization,
  we would obtain that $low(n)$ is even, which is contradictory with
  $low(n)=3$.
\item We proceed similarly for $low(n)=3$ and $low(n-2)$ even implies
  that $n$ is 3-even. First we have a canonical decomposition of
  $n-2 = F_{2p} + \fibrest$. Then $p=0$ is illegal, and $p=1$
  would give a relaxed
  decomposition of $n$ starting by 3, hence $low(n)$ even, impossible.
  And $p>1$ allows to write $n = F_3+F_{2p}+\fibrest$ in a canonical way.
\item When $n = F_{2p}+\fibrest$ with $p\ge 3$, then
  $n-1 = F_3+F_5+...+F_{2p-1}+\fibrest$
  and the condition on $p$ ensures that the terms $F_3$ and $F_5$ are
  indeed present.
\item If $n$ is 3-odd, it could be written $F_3+F_{2p+1}+\fibrest$.
  When $p>2$, then $n+5 = F_3+F_5+F_{2p+1}+\fibrest$ is also 3-odd.
  When $p=2$, then $n+8$ is also 3-odd. For justifying that,
  we need to consider the third term in the 
  canonical decomposition (if any).
  \begin{itemize}
  \item Either $n$ is of the form $F_3+F_5+F_7+...$
    Then $n+F_6 = F_3 + F_5 + F_8 + ...$ is a relaxed
    decomposition of $n+8$, hence $n+8$ is 3-odd (see the first
    part of the current theorem above).
  \item Either $n$ is of the form $F_3+F_5+\fibrest$
    where all terms in $\fibrest$ are strictly greater than $F_7$.
    Then $n+F_6 = F_3 + F_7 + \fibrest$ is a relaxed
    decomposition of $n+8$, hence $n+8$ is 3-odd.
  \end{itemize}
  We finish the proof by considering all intermediate numbers between
  $n$ and $n+8$ and we show that $n+5$ is the only one of them that
  might be 3-odd:
  \begin{itemize}
  \item $low(n+1)$ is even (cf. theorem \ref{fibsucc}).
  \item $n+2 = F_2+F_4+F_{2p+1}+\fibrest$, and normalizing this
    relaxed decomposition shows that $low(n+2)=2$.
  \item $n+3 = F_3+F_4+F_{2p+1}+\fibrest$, and normalizing this
    relaxed decomposition will either combine $F_4$ with some
    higher terms (leading to an odd second-lowest term and
    hence $n+3$ is 3-even) or combine $F_4$ with $F_3$ (in which
    case $low(n+3)\ge 5$).
  \item $n+4 = F_2+F_3+F_4+F_{2p+1}+\fibrest$, hence $low(n+4)$ is
    even.
  \item $n+6 = F_6+F_{2p+1}+\fibrest$ is a relaxed decomposition
   whose lowest rank is either 6 (when $p>2$) or 5 (when $p=2$).
   After normalization, we obtain $low(n+6)\ge 5$.
  \item Hence $low(n+7)=2$ by last case of theorem \ref{fibsucc}.
  \end{itemize}
\end{enumerate}
\end{proof}

\section{The $G$ function}

This section corresponds to file \doc{FunG}.

\subsection{Definition and initial study of $G$}

\begin{theorem}\label{defG}There exists a unique function
  $G:\mathbb{N}\to\mathbb{N}$ which satisfies the following
  equations:
  $$G(0) = 0$$
  $$\forall n>0,~~ G(n) = n - G(G(n-1))$$
\end{theorem}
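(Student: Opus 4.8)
The plan is to prove existence and uniqueness together by strong induction, since the recurrence defines $G(n)$ in terms of $G$ evaluated at smaller arguments (via the nested $G(G(n-1))$). Here I must first convince myself that the recursion is genuinely well-founded: $G(n)$ depends on $G(n-1)$ and on $G$ applied to the value $G(n-1)$, so I need to know that $G(n-1) < n$ so that the inner call is at a strictly smaller argument. This suggests that existence and the bound $G(n) \le n$ must be established simultaneously.

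\medskip

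First I would prove, by strong induction on $n$, the conjunction of two statements: that $G(n)$ is well-defined by the equations on $\{0,1,\dots,n\}$, and that $G(m) \le m$ for all $m \le n$. The base case $G(0)=0$ is immediate and satisfies $G(0)\le 0$. For the inductive step with $n>0$, the induction hypothesis gives a well-defined $G$ on $\{0,\dots,n-1\}$ with $G(m)\le m$ there; in particular $G(n-1)\le n-1 < n$, so $G(n-1)$ lies in the already-defined range, and hence $G(G(n-1))$ is defined (since $G(n-1)\le n-1$, the inner argument is again in range). This lets me set $G(n) := n - G(G(n-1))$ unambiguously. To close the induction I must also check $G(n)\le n$, which is immediate from $G(n) = n - G(G(n-1))$ once I note $G(G(n-1))\ge 0$ (and subtraction is nonnegative); so the bound is essentially free. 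Actually, to keep the nested call in range at every stage it is cleanest to carry the bound $G(m)\le m$ as part of the induction hypothesis from the start, which is exactly what the simultaneous formulation does.

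\medskip

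For uniqueness, I would suppose two functions $G_1$ and $G_2$ both satisfy the equations, and prove $G_1(n)=G_2(n)$ by strong induction on $n$. The case $n=0$ is forced by the first equation. For $n>0$, the induction hypothesis gives $G_1(m)=G_2(m)$ for all $m<n$; since both functions satisfy the bound $G_i(n-1)\le n-1 < n$ established above, the value $g := G_1(n-1)=G_2(n-1)$ satisfies $g<n$, so $G_1(g)=G_2(g)$ by the induction hypothesis, and therefore $G_1(n)=n-G_1(G_1(n-1))=n-G_2(G_2(n-1))=G_2(n)$. Note this argument relies on having already proved that any solution respects $G(m)\le m$, so I would factor that bound out as a lemma proved for an arbitrary solution of the equations, rather than only for the function constructed above.

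\medskip

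The main obstacle I anticipate is the well-foundedness bookkeeping for the nested recursion: one cannot simply invoke the recurrence to define $G(n)$ without first knowing that the inner argument $G(n-1)$ is strictly below $n$, and this bound is itself a consequence of the definition. The clean resolution is the simultaneous induction proving definedness together with $G(m)\le m$; everything else is routine. In a Coq formalization this is precisely the point where one must supply an explicit termination/measure argument (a decreasing measure, here the first argument), which is likely why the author highlighted custom \texttt{Fixpoint} justifications such as \texttt{g\_spec} earlier in the paper.
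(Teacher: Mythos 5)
Your proposal is correct and follows essentially the same route as the paper: existence via an induction that simultaneously constructs the values and carries the bound $G(m)\le m$ (the paper phrases this as finite sequences $G_0,\dots,G_n$ with values in $[0..n]$ that extend each other), and uniqueness via strong induction after first establishing the bound $0\le f(n)\le n$ for an arbitrary solution. Your observation that the bound must be factored out as a lemma about any solution, not just the constructed one, is exactly the point the paper makes before its uniqueness argument.
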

\begin{proof}
The difficulty is the second recursive call of $G$ on $G(n-1)$.
A priori $G(n-1)$ could be arbitrary, so how could we refer to it
during a recursive definition of $G$ ? Fortunately $G(n-1)$ will
always be strictly lower than $n$, and that will allow this recursive
call. More formally, we prove by induction on $n$ the following
statement: for all $n$, there exists a sequence $G_0...G_n$
of numbers in $[0..n]$ such that $G_0=0$ and
$\forall k\in[1..n], G_k = k - G_{k'}$ where $k' = G_{k-1}$.
\begin{itemize}
\item For $n=0$, an adequate sequence is obviously $G_0 = 0$.
\item For some $n$, suppose we have already proved the existence
  of an adequate sequence $G_0...G_n$.
  In particular $G_n \in [0..n]$, hence $G_{G_n}$
  is well-defined and is also in $[0..n]$. Finally $(n+1)-G_{G_n}$
  is in $[1..(n+1)]$. We define $G_{n+1}$ to be equal to this
  $(n+1)-G_{G_n}$ and obviously keep the previous values of the
  sequence. All these values are indeed in $[0..(n+1)]$, and the
  recursive equations are satisfied up to $k=n+1$. 
\end{itemize}
All these finite sequences that extend each other
lead to an infinite sequence $(G_n)_{n\in\mathbb{N}}$ of natural
numbers, which can also be seen as a function
$G:\mathbb{N}\to\mathbb{N}$, that satisfy the desired equations
by construction.

For the uniqueness, we should first prove that any function $f$
satisfying $f(0)=0$ and our recursive equation above is such that
$\forall n, 0\le f(n)\le n$. This proof can be done by strong
induction over $n$, and a bit of upper and lower bound manipulation.
Then we could prove (still via strong induction
over $n$) that $\forall n, f(n)=g(n)$ when $f$ and $g$
are any functions satisfying our equations. 
\end{proof}

The initial values are $G(0)=0, G(1)=G(2)=1, G(3)=2, G(4)=G(5)=3$.
We can then establish some basic properties of $G$:
\begin{theorem}\label{Gprops}
\noindent
\begin{enumerate}
\item $\forall n, 0 \le G(n) \le n$.
\item $\forall n\neq 0, G(n)=G(n-1)$ implies $G(n+1)=G(n)+1$.
\item $\forall n, G(n+1)-G(n) \in \{0,1\}$.
\item $\forall n,m, n\le m$ implies $0 \le G(m)-G(n) \le m-n$.
\item $\forall n, G(n)=0$ if and only if $n=0$.
\item $\forall n>1, G(n)<n$.
\end{enumerate}
\end{theorem}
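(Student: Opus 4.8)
The plan is to treat the six parts roughly in the order 1, 5, 6, 2, 3, 4, since the later parts lean on the earlier ones. Throughout I rely only on the two defining equations $G(0)=0$ and $G(n)=n-G(G(n-1))$ supplied by Theorem \ref{defG}.

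First I would prove part 1 by strong induction on $n$. For $n=0$ we have $G(0)=0$. For $n>0$, the induction hypothesis (the bound $0\le G(m)\le m$ for every $m<n$) gives $G(n-1)\le n-1<n$, so the inner recursive call is legitimate; applying the hypothesis a second time to the value $G(n-1)$ gives $0\le G(G(n-1))\le G(n-1)\le n-1$. Substituting into $G(n)=n-G(G(n-1))$ yields $1\le G(n)\le n$. Thus the inductive step in fact establishes $G(n)\ge 1$ for every $n>0$, which is exactly the nontrivial direction of part 5 (its converse $G(0)=0$ being immediate). Part 6 then follows at once: for $n>1$ we have $G(n-1)\ge 1$, hence $G(G(n-1))\ge 1$ by part 5 again, so $G(n)=n-G(G(n-1))\le n-1<n$.

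Part 2 is a direct computation from the equation. Assuming $n\neq 0$ and $G(n)=G(n-1)$, I apply $G$ to both sides to get $G(G(n))=G(G(n-1))$, and then
$$G(n+1)=(n+1)-G(G(n))=1+\bigl(n-G(G(n-1))\bigr)=1+G(n).$$
The heart of the theorem, and the step I expect to be the main obstacle, is part 3: that consecutive values of $G$ differ by $0$ or $1$. The key is the identity
$$G(n+1)-G(n)=1-\bigl(G(G(n))-G(G(n-1))\bigr),$$
obtained by subtracting the two instances of the recursive equation, so the claim reduces to showing $G(G(n))-G(G(n-1))\in\{0,1\}$. I would prove part 3 by strong induction on $n$, with base case $G(1)-G(0)=1$. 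For the inductive step, the hypothesis at $n-1$ tells us $G(n)-G(n-1)\in\{0,1\}$, so there are two cases. If $G(n)=G(n-1)$, then $G(G(n))-G(G(n-1))=0$ and the difference is $1$. If instead $G(n)=G(n-1)+1$, then $G(G(n))-G(G(n-1))=G(G(n-1)+1)-G(G(n-1))$, which is an instance of the very statement being proved, taken at the point $m=G(n-1)$; since $G(n-1)\le n-1<n$ by part 1, the strong induction hypothesis applies and bounds this difference in $\{0,1\}$. In either case $G(n+1)-G(n)\in\{0,1\}$. The care needed is purely in arranging the strong induction so that the inner argument $G(n-1)$ is genuinely smaller than $n$, which is precisely where part 1 is used.

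Finally, part 4 is a routine consequence of part 3 by telescoping: writing $G(m)-G(n)=\sum_{k=n}^{m-1}\bigl(G(k+1)-G(k)\bigr)$, a sum of $m-n$ terms each lying in $\{0,1\}$ (or equivalently by induction on $m-n$), gives immediately $0\le G(m)-G(n)\le m-n$.
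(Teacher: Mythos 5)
Your proposal is correct, and its core --- the strong induction for part 3, based on the identity $G(n+1)-G(n)=1-\bigl(G(G(n))-G(G(n-1))\bigr)$ with the inner induction hypothesis invoked at the point $G(n-1)<n$ --- is exactly the argument in the paper, as are parts 2 and 4. Where you diverge is in the supporting parts 1, 5 and 6. The paper gets part 1 for free by pointing back to the existence proof of the defining theorem (where the bound $0\le G_k\le k$ was built into the construction of the sequence), and it then derives parts 5 and 6 \emph{from} part 4: monotonicity together with the computed values $G(1)=1$ and $G(2)=1$ gives $G(n)\ge G(1)=1$ for $n\ge 1$ and $G(n)\le G(2)+(n-2)=n-1$ for $n\ge 2$. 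You instead re-run the bounding induction for part 1 and notice that its inductive step already yields the sharper bound $1\le G(n)\le n$ for $n>0$, from which 5 is immediate and 6 follows by one more application of the recursion ($G(G(n-1))\ge 1$ forces $G(n)\le n-1$). Your route makes 5 and 6 independent of the difference-bound machinery of parts 3--4, which is a cleaner dependency structure and would survive even if one only had the recursion and not yet monotonicity; the paper's route is shorter on the page because it recycles results already in hand. Both are sound, and the discrepancy in ordering (you do 1, 5, 6, 2, 3, 4) is exactly what this different dependency structure requires.
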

\begin{proof}
\noindent
\begin{enumerate}
\item Already seen during the definition of $G$.
\item $G(n+1)-G(n) = (n+1)-G(G(n))-n+G(G(n-1)) = 1 - (G(G(n))-G(G(n-1)))$.
If $G(n)=G(n-1)$ then the previous expression is equal to 1.
\item By strong induction over $n$. First, $G(1)-G(0)=1-0$. Then
for a given $n\neq 0$, we assume $\forall k<n, G(k+1)-G(k) \in \{0,1\}$.
We reuse the same formulation of $G(n+1)-G(n)$ as before.
By induction hypothesis for $k=n-1$, $G(n)-G(n-1) \in\{0,1\}$.
If it's 0, then $G(n+1)-G(n) = 1$ as before. If it's 1, we could use another
induction hypothesis for $k=G(n-1)$ (and hence $k+1 = G(n)$), leading
to $G(G(n))-G(G(n-1)) \in \{0,1\}$ and the desired result.
\item Mere iteration of the previous result between $n$ and $m$.
\item For all $n\ge 1$, $0 \le G(n)-G(1)$ and $G(1)=1$.
\item For all $n\ge 2$, $G(n)-G(2) \le n-2$ and $G(2)=1$.
\end{enumerate}
\end{proof}

We can also say that $lim_{+\infty} G = +\infty$, since $G$ is
monotone and grows by at least 1 every 2 steps (any stagnation
is followed by a growth). Taking into account this limit, and
$G(0)=0$, and the growth by steps of 1, we can also deduce
that $G$ is onto: any natural number has at least one antecedent
by $G$. Moreover there cannot exists more than two antecedents
for a given value: by monotonicity, these antecedents are neighbors,
and having more than two would contradict the ``stagnation followed
by growth'' rule.
We can even provide explicitly one of these antecedent:

\begin{theorem}\label{Gonto}
$\forall n, G(n+G(n))=n$.
\end{theorem}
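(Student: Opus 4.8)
The plan is to prove the identity by induction on $n$, leaning on the recursive equation $G(m) = m - G(G(m-1))$ together with the fact, already established in Theorem \ref{Gprops}, that $G$ increases by steps of $0$ or $1$. Writing $m = n + G(n)$, the inductive hypothesis is precisely $G(m) = n$, and the goal at the successor is $G\bigl((n+1) + G(n+1)\bigr) = n+1$. The base case $n = 0$ is immediate, since $G(0+G(0)) = G(0) = 0$.

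For the inductive step, the first move is a direct unfolding of the recurrence at $m+1$: since $m+1 > 0$,
$$G(m+1) = (m+1) - G(G(m)) = (m+1) - G(n) = \bigl(n + G(n) + 1\bigr) - G(n) = n+1,$$
using only the inductive hypothesis $G(m) = n$. So the value $n+1$ is already reached at index $m+1$, independently of the behaviour of $G$ at $n$. It then remains to locate the index $(n+1) + G(n+1)$ relative to $m$, which is where the step size of $G$ enters.

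By Theorem \ref{Gprops} the difference $G(n+1) - G(n)$ is either $0$ or $1$, and I would split on these two cases. If $G(n+1) = G(n)$, then $(n+1) + G(n+1) = m+1$ and the computation above gives the result at once. If $G(n+1) = G(n)+1$, then $(n+1) + G(n+1) = m+2$, and I unfold once more:
$$G(m+2) = (m+2) - G(G(m+1)) = (m+2) - G(n+1) = (m+2) - \bigl(G(n)+1\bigr) = n+1,$$
now using the freshly computed $G(m+1) = n+1$ together with the case assumption. Either way the target value is $n+1$.

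I expect the only real obstacle to be bookkeeping the index shift: the antecedent $(n+1)+G(n+1)$ is not always $m+1$ but jumps to $m+2$ exactly when $G$ grows, so the proof must track both index and value simultaneously and invoke the step-size property of $G$ at the right moment. Checking that the recurrence is applicable, i.e. that the indices $m+1$ and $m+2$ are positive, is routine.
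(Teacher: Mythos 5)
Your proof is correct, but it takes a genuinely different route from the paper. The paper first argues (in the paragraph preceding the theorem) that $G$ is onto and that every value has at most two antecedents; it then takes $k$ to be the \emph{largest} antecedent of $n$, so that $G(k)=n$ and $G(k+1)=G(k)+1$, and plugs this into the defining equation $G(k+1)=k+1-G(G(k))$ to deduce $k=n+G(n)$. Your argument instead is a direct induction on $n$, using only the recurrence and the step-size property $G(n+1)-G(n)\in\{0,1\}$ (item 3 of Theorem \ref{Gprops}): from the hypothesis $G(m)=n$ with $m=n+G(n)$ you compute $G(m+1)=n+1$ outright, then split on whether $G$ stays flat or grows at $n$ to identify $(n+1)+G(n+1)$ as $m+1$ or $m+2$, unfolding the recurrence once more in the latter case; all three computations check out. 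What your approach buys is self-containedness: it avoids the surjectivity and at-most-two-antecedents facts, which the paper only establishes by informal prose reasoning about stagnation and growth. What the paper's approach buys is structural insight used later: its proof directly exhibits $n+G(n)$ as the \emph{largest} antecedent of $n$ (the rightmost child in the tree), which feeds into the remark following the theorem and into Theorem \ref{Gnodes}. Note, though, that your computation $G(m+1)=n+1$ recovers that remark as well, since it shows $G(n+G(n)+1)=n+1\neq n$.
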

\begin{proof}
We've just proved that $n$ has at least one antecedent, and
no more than two. Let $k$ be the largest of these antecedents.
Hence $G(k)=n$ and $G(k+1)\neq n$, leading to $G(k+1)=G(k)+1$.
If we re-inject this into the defining equation $G(k+1) = k+1 -
G(G(k))$, we obtain that $G(k)+G(G(k))=k$ hence $n+G(n)=k$,
and finally $G(n+G(n))=G(k)=n$.
\end{proof}

As shown during the previous proof, $n+G(n)$ is actually the largest antecedent
of $n$ by $G$. In particular $G(n+G(n)+1)=n+1$. And if $n$ has another
antecedent, it will hence be $n+G(n)-1$.

From this, we can deduce a first relationship between $G$ and
Fibonacci numbers.
\begin{theorem}\label{Gfib} For all $k\ge 2$, $G(F_k)=F_{k-1}$.
\end{theorem}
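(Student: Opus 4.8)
The plan is to proceed by strong induction on $k$, using the recursive defining equation $G(n) = n - G(G(n-1))$ for $n > 0$ together with the fact that consecutive Fibonacci numbers satisfy $F_{k-1} = F_k - F_{k-2}$. The base cases are immediate: $G(F_2) = G(1) = 1 = F_1$ and $G(F_3) = G(2) = 1 = F_2$, both readable off the initial values listed just before Theorem~\ref{Gonto}.

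For the inductive step, fix $k \geq 4$ and assume $G(F_j) = F_{j-1}$ holds for all $j$ with $2 \leq j < k$. I would instantiate the defining equation at $n = F_k$, which is legitimate since $F_k > 0$:
$$G(F_k) = F_k - G(G(F_k - 1)).$$
The obstacle is that this introduces $F_k - 1$ rather than a Fibonacci number, so the induction hypothesis does not apply directly to $G(F_k - 1)$. The natural fix is to avoid the predecessor altogether by instead applying Theorem~\ref{Gonto}, which states $G(n + G(n)) = n$. Setting $n = F_{k-1}$ and using the induction hypothesis $G(F_{k-1}) = F_{k-2}$, I get
$$G(F_{k-1} + F_{k-2}) = G(F_k) = F_{k-1},$$
where the last equality uses the Fibonacci recurrence $F_{k-1} + F_{k-2} = F_k$. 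This closes the induction cleanly.

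This second route is the one I expect to work most smoothly, since it sidesteps the awkward $F_k - 1$ term entirely and reduces everything to a single application of the onto-lemma plus the Fibonacci recurrence. The main thing to verify is that the hypotheses of Theorem~\ref{Gonto} are met and that $F_{k-1} \geq F_2$ so the induction hypothesis genuinely applies to $F_{k-1}$; for $k \geq 4$ we have $k - 1 \geq 3 \geq 2$, so this is fine. I would therefore present the argument as: base cases $k = 2, 3$ by direct computation, then for $k \geq 4$ invoke $G(F_{k-1} + F_{k-2}) = F_{k-1}$ via Theorem~\ref{Gonto} and rewrite $F_{k-1} + F_{k-2}$ as $F_k$ to conclude $G(F_k) = F_{k-1}$.
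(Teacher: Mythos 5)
Your proof is correct and is essentially the paper's own argument: the paper likewise proves the base case $G(F_2)=F_1$ directly and then applies Theorem~\ref{Gonto} at the previous Fibonacci number, writing $G(F_{k+1})=G(F_k+F_{k-1})=G(F_k+G(F_k))=F_k$, which is your inductive step up to reindexing. The only cosmetic differences are that you phrase it as strong induction and add an unneeded $k=3$ base case, neither of which changes the substance.
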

\begin{proof}By induction over $k$. First, $G(F_2)=G(1)=1=F_1$.
Take now a $k\ge 2$ and assume that $G(F_k)=F_{k-1}$.
Then $G(F_{k+1})=G(F_k+F_{k-1})=G(F_k+G(F_k))=F_k$.
\end{proof}

Moreover, for $k>2$, $F_k = F_{k-1}+G(F_{k-1})$ is the largest antecedent of
$F_{k-1}$, hence $G(1+F_k)=1+F_{k-1}$.

We could also establish a alternative equation for $G$, which will be
used during the study of function $\FG$:

\begin{theorem}\label{Galt} For all $n$ we have $G(n) + G(G(n+1)-1) = n$.
\end{theorem}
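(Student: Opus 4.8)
The plan is to reduce the statement to a purely local comparison of two nested applications of $G$, and then dispatch a short case analysis on the behaviour of $G$ around $n$. First I would dispose of $n=0$ by direct computation: $G(0)+G(G(1)-1)=0+G(0)=0$. For $n\ge 1$ the defining equation of Theorem~\ref{defG} at $n$ reads $G(n)=n-G(G(n-1))$, that is, $G(n)+G(G(n-1))=n$. Comparing this with the goal shows that it suffices to establish the identity
$$G(G(n+1)-1)=G(G(n-1)) \qquad (n\ge 1),$$
so the whole theorem hinges on understanding how $G(n+1)-1$ relates to $G(n-1)$.

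Next I would introduce the two consecutive increments $d_1=G(n)-G(n-1)$ and $d_2=G(n+1)-G(n)$. By part 3 of Theorem~\ref{Gprops} each lies in $\{0,1\}$, and by part 2 (a stagnation is always immediately followed by a growth) the pair $(d_1,d_2)=(0,0)$ is impossible for $n\ge 1$. Hence $G(n+1)-G(n-1)=d_1+d_2\in\{1,2\}$. In the two cases where this sum is $1$, namely $(0,1)$ and $(1,0)$, we simply have $G(n+1)-1=G(n-1)$ and the required identity is immediate; here $G(n+1)\ge 1$ by part 5 of Theorem~\ref{Gprops}, so the truncated subtraction of natural numbers behaves as intended.

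The only genuine work lies in the double-growth case $(d_1,d_2)=(1,1)$, where $G(n+1)-1=G(n-1)+1$, so I must show $G(G(n-1)+1)=G(G(n-1))$. The key move is to subtract the defining equation at $n$ from the one at $n+1$:
$$G(n+1)-G(n)=1-\bigl(G(G(n))-G(G(n-1))\bigr).$$
Since $d_2=1$, the left-hand side equals $1$, which forces $G(G(n))=G(G(n-1))$; and since $d_1=1$ we have $G(n)=G(n-1)+1$, so this reads exactly $G(G(n-1)+1)=G(G(n-1))$, as needed. I expect this differencing step to be the crux of the whole argument: it is the one place that genuinely exploits the nested self-reference of $G$ rather than mere monotonicity, and keeping the indices and the truncated subtraction straight is where care is required. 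Everything else reduces to bookkeeping with the already-established local properties of $G$.
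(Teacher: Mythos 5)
Your proof is correct, and every step you invoke is available at this point in the paper (parts 2, 3 and 5 of Theorem \ref{Gprops}, plus the defining equation of Theorem \ref{defG}); in particular your exclusion of the $(0,0)$ increment pair and your care with truncated subtraction are both sound. But your route is organized differently from the paper's. The paper splits only on $d_2=G(n+1)-G(n)$: when $d_2=0$, flatness forces $d_1=1$, so $G(n+1)-1=G(n-1)$ and the defining equation at $n$ finishes, just as in your ``sum equals one'' case; when $d_2=1$, however, the paper avoids looking at $d_1$ altogether via the swap $G(n)+G(G(n+1)-1)=\bigl(G(n+1)-1\bigr)+G(G(n))$, which the defining equation at $n+1$ evaluates to $(n+1)-1=n$ in one line, uniformly over both $(0,1)$ and $(1,1)$. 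You instead normalize the goal to the identity $G(G(n+1)-1)=G(G(n-1))$ and must then handle the double-growth case $(1,1)$ separately; the differencing of the defining equations at $n$ and $n+1$ that you use there is precisely the computation the paper deploys later to prove Theorem \ref{Gdelta} about $\Delta G$. So the paper's proof is slightly shorter (two cases, one instance of the defining equation each), while yours buys something else: it makes explicit that the theorem is equivalent to a local identity on nested values of $G$, and it exposes the connection to the derivative equation before the paper introduces it.
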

\begin{proof}
First, this equation holds when $n=0$ since
$G(0)+G(G(1)-1) = G(0) + G(1-1) = 0$.
We consider now a number $n\neq 0$. Either $G(n+1)=G(n)$ or $G(n+1)=G(n)+1$.
\begin{itemize}
\item In the first case, $G(n-1)$ cannot be equal to $G(n)$ as well,
otherwise $G$ would stay flat longer than possible. Hence $G(n-1)=G(n)-1$.
Hence $G(n)+G(G(n+1)-1) = G(n) + G(G(n)-1) = G(n)+G(G(n-1)) = n$.
\item In the second case:
$G(n) + G(G(n+1)-1) = G(n+1)-1 + G(G(n)) = (n+1)-1 = n$.
\end{itemize}
\end{proof}

\subsection{The associated $G$ tree}

Hofstadter proposed to associate the $G$ function with an infinite
tree satisfying the following properties:
\begin{itemize}
\item The nodes are labeled by all the positive numbers,
 in the order of a left-to-right breadth-first traversal,
 starting at 1 for the root of the tree.
\item For $n\neq 0$, the node $n$ is a child of the node $g(n)$.
\end{itemize}

In practice, the tree can be constructed progressively, node after node.
For instance, the node $2$ is the only child of $1$, and $3$
is the only child of $2$, while $4$ and $5$ are the children of $3$.
Now come the nodes $6$ and $7$ on top of $4$, and so on. The picture
below represents the tree up to depth 7 (assuming that the root is
at depth 0).

\bigskip

\begin{tikzpicture}[grow'=up]
\Tree
 [.1 [.2 [.3
       [.4 [.6 [.9 [.14 22 23 ] [.15 24 ] ]
               [.10 [.16 25 26 ]]]
           [.7 [.11 [.17 27 28 ] [.18 29 ]]]]
       [.5 [.8 [.12 [.19 30 31 ] [.20 32 ]]
               [.13 [.21 33 34 ]]]]]]]
\end{tikzpicture}

This construction process will always be successful : when adding the
new node $n$, this new node $n$ will be linked
to a parent node already constructed, since $G(n)<n$ as soon as $n>1$.
Moreover, $G$ is monotone and
grows by at most 1 at each step, hence the position of the new node
$n$ will be compatible with the left-to-right breadth-first ordering:
$n$ has either the same parent as $n-1$, and we place $n$ to the right
of $n-1$, or $G(n)=G(n-1)+1$ in which case $n$ is either to be placed
on the right of $n-1$, or at a greater depth.

Moreover, since $G$ is onto, each node will have at least one child,
and we have already seen that each number has at most two antecedents
by $G$, hence the node arities are 1 or 2.

\subsubsection*{Tree depth}
On the previous picture, we can notice that the rightmost nodes are
Fibonacci numbers, while the leftmost nodes have the form $1+F_k$.
Before proving this fact, let us first define more properly a depth
function.

\begin{definition}
For a number $n>0$, we note $\depth(n)$ the least number $d$ such that
$G^d(n)$ (the $d$-th iteration of $G$ on $n$) reaches 1.
We complete this definition by choosing arbitrarily $\depth(0)=0$.
\end{definition}

For $n>0$, such a number $d$ is guaranteed to exists since $G(k)<k$ as long as
$k$ is still not $1$, hence the sequence $G^k(n)$ is strictly
decreasing as long as it hasn't reached 1.
In particular, we have $\depth(1)=0$, and for all $n>1$ we have
$\depth(n) = 1+\depth(G(n))$. Hence our depth function is compatible
with the usual notion of depth in a tree.

\begin{theorem}\label{depthprops}
\noindent
\begin{enumerate}
\item For all $n$, $\depth(n)=0$ if and only if $n\le 1$.
\item For $n>1$ and $k<\depth(n)$, we have $G^k(n) > 1$.
\item For $n,m>0$, $n\le m$ implies $\depth(n)\le \depth(m)$.
\item For $k>1$, $\depth(F_k) = k-2$.
\item For $k>1$, $\depth(1+F_k) = k-1$.
\item For $k>0,n>0$, $\depth(n)=k$ if and only if $1+F_{k+1} \leq n \leq
  F_{k+2}$.
\end{enumerate}
\end{theorem}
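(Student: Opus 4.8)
The plan is to establish the six claims in sequence, since several later parts rely on earlier ones. The overarching strategy is induction together with the Fibonacci facts already in hand, chiefly Theorem~\ref{Gfib} ($G(F_k)=F_{k-1}$) and the remark that $1+F_k$ is the largest antecedent yielding $G(1+F_k)=1+F_{k-1}$ for $k>2$. I would treat parts~1--3 as a block of elementary properties, then parts~4--5 as the Fibonacci endpoints, and finally part~6 as the synthesis.

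For part~1, the definition gives $depth(0)=0$ and $depth(1)=0$ directly, and for $n>1$ we have $G(n)<n$ (Theorem~\ref{Gprops}, item~6) so that $G(n)$ is a strictly smaller positive number, forcing $depth(n)\ge 1$; this gives the equivalence. Part~2 is immediate from minimality in the definition of $depth$: if some $k<depth(n)$ had $G^k(n)\le 1$, then either $G^k(n)=1$ contradicts that $depth(n)$ is the \emph{least} such index, or $G^k(n)=0$ is impossible since $G$ sends positive numbers to positive numbers (Theorem~\ref{Gprops}, item~5). For part~3, I would induct on $depth(m)$ using the recurrence $depth(m)=1+depth(G(m))$: from $n\le m$ and monotonicity of $G$ (Theorem~\ref{Gprops}, item~4) we get $G(n)\le G(m)$, and a short case analysis on whether $n=1$ handles the base.

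Parts~4 and~5 go by induction on $k$. For part~4, the base $k=2$ gives $depth(F_2)=depth(1)=0=k-2$; for the step, $depth(F_{k+1})=1+depth(G(F_{k+1}))=1+depth(F_k)=1+(k-2)=k-1$ by Theorem~\ref{Gfib} and the inductive hypothesis. Part~5 is analogous, using $G(1+F_k)=1+F_{k-1}$ together with part~4 to anchor the recursion, or alternatively deriving it from part~6 once that is available. The main obstacle is part~6, the characterization $depth(n)=k \iff 1+F_{k+1}\le n\le F_{k+2}$. I would prove it by combining the three monotonicity-and-endpoint facts just assembled: parts~4 and~5 pin down $depth(1+F_{k+1})=k$ and $depth(F_{k+2})=k$, and part~3 (monotonicity of $depth$) squeezes every $n$ in the interval $[1+F_{k+1},F_{k+2}]$ to also have depth~$k$. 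The converse direction requires checking that the intervals $[1+F_{k+1},F_{k+2}]$ exactly tile the positive integers as $k$ ranges over $\{1,2,\dots\}$ with no gaps or overlaps, which follows because the right endpoint $F_{k+2}$ of one interval is immediately succeeded by the left endpoint $1+F_{k+2}=1+F_{(k+1)+1}$ of the next; this tiling argument, ensuring each $n$ lands in a unique interval and hence has a unique depth, is the delicate bookkeeping step I expect to require the most care.
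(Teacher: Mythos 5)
Your proposal is correct and follows essentially the same route as the paper: minimality of $depth$ for part~2, monotonicity of $G$ (iterated) for part~3, the endpoint computations $G(F_k)=F_{k-1}$ and $G(1+F_k)=1+F_{k-1}$ for parts~4--5, and squeezing via monotone depth for part~6. Your ``tiling'' argument for the converse of part~6 is just a repackaging of the paper's observation that $n<1+F_{k+1}$ means $n\le F_{k+1}$ (so $depth(n)\le k-1$) and $n>F_{k+2}$ means $n\ge 1+F_{k+2}$ (so $depth(n)\ge k+1$), so the two proofs coincide in substance.
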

\begin{proof}
\noindent
\begin{enumerate}
\item The statement is valid when $n=0$. For a $n>0$, 
$\depth(n)=0$ means $G^0(n)=1$ i.e. $n=1$.
\item Consequence of the minimality of depth.
\item Consequence of the monotony of $G$ :
  for all $k$, $G^k(n)\le G^k(m)$ hence the iterates of $n$ will reach
  1 faster than the iterates of $m$.
\item $G$ maps a Fibonacci number to the previous one, $k-2$
  iterations of $G$ on $F_k$ leads to $F_2=1$.
\item $G$ maps a successor of a Fibonacci number to the previous such
  number, $k-2$ iterations of $G$ on $1+F_k$ leads to $1+F_2 = 2$.
\item If $1+F_{k+1} \leq n \leq F_{k+2}$, the monotony of depth and the
previous facts gives $k \le \depth(n) \le k$, hence $\depth(n)=k$.
Conversely, when $\depth(n)=k$, we cannot have $n < 1 + F_{k+1}$ otherwise
$n \le F_{k+1}$ and hence $\depth(n)\le k-1$, and we cannot have $n > F_{k+2}$
otherwise $n \ge 1+F_{k+2}$ and hence $\depth(n)\ge k+1$.
\end{enumerate}
\end{proof}

The previous characterization of $\depth(n)$ via Fibonacci bounds shows
that depth could also have been defined thanks to {\tt
  fib\_inv$(n-1)-1$}, see \ref{fibinv}. It also shows that the number of
nodes at depth $k$ is $F_{k+2}-(1+F_{k+1})+1 = F_{k}$ as soon as
$k\neq 0$.

\subsubsection*{The shape of the $G$ tree}

If we ignore the node labels and concentrate on the shape of the $G$
tree, we encounter a great regularity. $G$ starts with two unary
nodes that are particular cases (labeled 1 and 2), and after that
we encounter a sub-tree $G'$ whose shape is obtained by repetitions
of the same basic pattern:

\bigskip

G = 
\begin{tikzpicture}[grow'=up]
\Tree [.$\bullet$ [.$\bullet$ G' ]]
\end{tikzpicture}
~~and~~
G' =
\begin{tikzpicture}[grow'=up]
\Tree [.$\bullet$ [.G' ] [.$\bullet$ [.G' ] ]]
\end{tikzpicture}

G' has hence a fractal shape: it appears as a sub-tree of itself. To
prove the existence of such a pattern, we study the arity of children
nodes.
\begin{theorem}\label{Gnodes}
In the $G$ tree, a binary node has a left child which is also binary,
and a right child which is unary, while the unique child of a unary
node is itself binary.
\end{theorem}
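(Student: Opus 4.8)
The plan is to reduce the whole statement to a single clean criterion: a node is binary exactly when it sits at a ``growth step'' of $G$. Once that criterion is in hand, each of the three assertions becomes a one-line evaluation of $G$ at a child and at its predecessor. So I would first determine, for an arbitrary node $n$, exactly which numbers are its children --- namely the antecedents $\{k : G(k)=n\}$ (excluding the self-loop at the root) --- and count them.

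For the counting step I would assemble the facts already gathered around Theorem~\ref{Gonto}. Since $G$ is monotone (Theorem~\ref{Gprops}) and grows by at most $1$, the antecedents of a fixed value form a run of consecutive integers. Theorem~\ref{Gonto} gives that $n+G(n)$ is an antecedent of $n$, and the remark following it gives that it is the \emph{largest} one and that $G(n+G(n)+1)=n+1$. Applying this last identity at $n-1$ shows $G((n-1)+G(n-1)+1)=n$, so the \emph{smallest} antecedent of $n$ is $n+G(n-1)$. Hence, for $n\ge 2$, the children of $n$ are exactly the integers in the interval $[\,n+G(n-1),\ n+G(n)\,]$, of which there are $G(n)-G(n-1)+1$. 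Because $G(n)-G(n-1)\in\{0,1\}$, this recovers that every node has arity $1$ or $2$ and, crucially, that $n$ is binary iff $G(n)=G(n-1)+1$ (a growth step) and unary iff $G(n)=G(n-1)$ (a stagnation step).

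With this criterion I would dispatch the three claims by locating each child and computing one difference. Write $a=n+G(n-1)$ and $b=n+G(n)$ for the smallest and largest children; both satisfy $G(a)=G(b)=n$. For the \emph{leftmost} child $a$ of any node one has $G(a-1)=G((n-1)+G(n-1))=n-1$, so $G(a)-G(a-1)=1$: the leftmost child always lands on a growth step and is therefore binary. This simultaneously settles the left child of a binary node and the unique child of a unary node (for which $a=b$). For the \emph{right} child $b$ of a binary node, binarity of $n$ means $a=b-1$, so $G(b-1)=G(a)=n=G(b)$, a stagnation step, whence $b$ is unary. This proves all three statements.

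The only genuine obstacle is the bookkeeping at the top of the tree: the criterion is derived for $n\ge 2$, and the root $1$ is a true exception because $G(1)=1$ creates a self-loop (node $1$ is formally its own antecedent), so its only actual child is $2$, which is unary rather than binary. I would therefore derive the equivalence for $n\ge 2$, observe that it already handles node $2$ correctly (its child $3$ is binary), and simply exclude the root $1$, which is one of the ``particular cases'' singled out when introducing the repeating pattern $G'$.
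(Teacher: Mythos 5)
Your proposal is correct, and it takes a genuinely different decomposition than the paper, even though both rest on the same ingredients (Theorem~\ref{Gonto}, the remark that $n+G(n)$ is the \emph{largest} antecedent with $G(n+G(n)+1)=n+1$, and monotonicity with unit steps from Theorem~\ref{Gprops}). The paper never states your interval characterization of the children; it argues pointwise instead: for the leftmost child $p$ of $n$ (defined by $G(p)=n$ and $G(p-1)=n-1$) it exhibits a second antecedent of $p$ by computing $G(p+G(p)-1)=G\bigl((p-1)+G(p-1)+1\bigr)=p$, and for the right child $p=n+G(n)$ of a binary node it shows the would-be second child satisfies $G(p+G(p)-1)=G\bigl((p-1)+G(p-1)\bigr)=p-1$, hence hangs under the left sibling. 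Your route instead proves once and for all that for $n\ge 2$ the children of $n$ are exactly the integers in the interval from $n+G(n-1)$ to $n+G(n)$, so that the arity equals $1+\Delta G(n-1)$ (``binary iff growth step''), after which all three claims are single evaluations of $G$ via Theorem~\ref{Gonto}. What your version buys is a reusable criterion --- which incidentally ties node arity to the derivative $\Delta G$ studied in Section~\ref{deltaG}, and reappears in substance as point 4 of Theorem~\ref{Gclass1} --- and a mechanical dispatch; the cost is having to establish the ``smallest antecedent'' formula, which the paper never needs. Your handling of the root is also more careful than the paper's: as you note, the statement read literally fails at node $1$ (a unary node whose unique child $2$ is unary), and the paper sidesteps this silently --- its formal notion of ``leftmost child'' at node $1$ picks out the self-loop, and nodes $1$ and $2$ were declared ``particular cases'' only in the prose before the theorem --- whereas you name the exception and confine it explicitly.
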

\begin{proof}
First, we show that the leftmost child of a node is always binary.
Let $n$ be a node, and $p$ its leftmost child, i.e. $G(p)=n$ and
$G(p-1)=n-1$. We already know one child $q=p+G(p)$ of $p$, let's now
show that $q-1$ is also a child of $p$. For that, we consider $p-1$
and its rightmost child $q'=p-1+G(p-1)$. Rightmost means that
$G(q'+1)=p$. So $G(q-1)=G(p+G(p)-1)=G(p+n-1)=G(p+G(p-1))=G(q'+1)=p$
and we can conclude.

We can now affirm that the unique child of a unary node is itself
binary, since this child is in particular the leftmost one.

Let's now consider a binary node $n$, with its right child $p=n+G(n)$
and its left child $p-1$: $G(p)=G(p-1)=n$.

\begin{tikzpicture}[grow'=up]
\Tree [.$n$ [.$p-1$ $q-2$ $q-1$ ] [.$p$ $q$ ]]
\end{tikzpicture}

The leftmost child $p-1$
is already known to be binary. Let's now show that the right child $p$
is unary. If $p$ has a second child, it will be $q-1$, but:
$G(q-1) = G(p+G(p)-1) = G(p-1+G(p-1)) = p-1$.
so $q-1$ is a child of $p-1$ rather than $p$, and $p$ is indeed
a unary node.
\end{proof}

\subsection{$G$ and Fibonacci decompositions}

\begin{theorem}\label{Gshift}
Let $n = \fibrest$ be a relaxed Fibonacci decomposition.
Then $G(n) = \Sigma F_{i-1}$, the Fibonacci decomposition obtained
by removing 1 at all the ranks in the initial decomposition.
\end{theorem}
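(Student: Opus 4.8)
The plan is to establish the identity first for the canonical (Zeckendorf) decomposition of $n$ and then transfer it to every relaxed decomposition. For the transfer I would note that the shift-down operation $\fibrest \mapsto \Sigma F_{i-1}$ is invariant under one normalization step $F_m+F_{m+1}\to F_{m+2}$: the two shifted terms it removes, $F_{m-1}+F_m$, are replaced by the single shifted term $F_{m+1}$, and since $m\ge 2$ in a relaxed decomposition the term $F_{m-1}$ is legal, so the identity $F_{m-1}+F_m=F_{m+1}$ leaves the shifted sum unchanged. By Theorem~\ref{norm} every relaxed decomposition reduces to the canonical one (unique by Theorem~\ref{zeck}), so all relaxed decompositions of $n$ share the shifted sum of the canonical decomposition, and it suffices to compute $G(n)$ from the latter.

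For the canonical case I would argue by strong induction on $n$, feeding the defining equation $G(n)=n-G(G(n-1))$ of Theorem~\ref{defG}. Writing $n=F_r+R$ with $r=low(n)$ and $R$ the (possibly empty) canonical tail of ranks $\ge r+2$, I would split on the parity of $r$. When $r$ is odd the predecessor identity $F_r-1=F_2+F_4+\cdots+F_{r-1}$ (recalled before Theorem~\ref{fibpred}) rewrites $n-1$ as a canonical decomposition of lowest rank $2$; when $r$ is even and $>2$ the companion identity gives lowest rank $3$; and $r=2$ is the trivial case $n-1=R$. In each case $n-1<n$ and $G(n-1)\le n-1<n$ by Theorem~\ref{Gprops}, so the induction hypothesis applies to both $n-1$ and $G(n-1)$.

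Applying the hypothesis to $n-1$ yields $G(n-1)$ as the shift-down of its decomposition; this shift-down may begin with $F_1$, but the same telescoping identities collapse that leading block back to a single Fibonacci number (for instance $F_1+F_3+\cdots+F_{2p-1}=F_{2p}$), re-expressing $G(n-1)$ in genuine canonical form before the second use of the hypothesis. A further shift-down then gives, uniformly across the cases, $G(G(n-1))=F_{r-2}+R^{--}$, where $R^{--}$ shifts each rank of $R$ down twice. Subtracting from $n$ and collapsing through $F_a-F_{a-2}=F_{a-1}$, the tail contributes $R-R^{--}=R^{-}$ and the spine turns $F_r$ into $F_{r-1}$, giving $G(n)=F_{r-1}+R^{-}$, which is exactly the shift-down of $n$.

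The main obstacle is entirely the bookkeeping of these telescopes: one must carry the arithmetic spine $F_2+F_4+\cdots$ (or $F_3+F_5+\cdots$) and the tail $R$ through two successive shift-downs, absorb the spurious leading $F_1$ via the predecessor identities at the right moment, and check the degenerate situations (empty $R$, and the coincidence $F_1=F_2$ that makes the $r=2$ case land correctly). None of the individual steps is deep, but the index arithmetic has to be tracked carefully, and verifying that both recursive arguments are strictly below $n$ is what legitimizes the induction as a whole.
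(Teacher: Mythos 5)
Your proposal is correct, and its inductive core coincides with the paper's: both run a strong induction on $n$ through the defining equation $G(n)=n-G(G(n-1))$, split on whether the lowest rank is $2$, even $>2$, or odd, and use the telescoping identities $F_{2p}-1=F_3+\cdots+F_{2p-1}$ and $F_{2p+1}-1=F_2+\cdots+F_{2p}$ to rewrite $n-1$. The organizational difference is where the relaxed/canonical gap is bridged. The paper states the induction hypothesis for \emph{all relaxed} decompositions; that stronger hypothesis applies directly to the intermediate decompositions produced by each shift, and the only repair ever needed is the value-preserving swap of a leading $F_1$ into $F_2$ (the intermediate object need only remain relaxed). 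You instead prove the statement for canonical decompositions only and transfer it to relaxed ones by a separate lemma: the shift-down sum is invariant under each normalization step $F_m+F_{m+1}\to F_{m+2}$, since the shifted terms satisfy $F_{m-1}+F_m=F_{m+1}$, so by Theorems \ref{norm} and \ref{zeck} every relaxed decomposition of $n$ shares the canonical one's shifted value. The price is paid inside the induction, where you must re-canonicalize intermediate results by collapsing whole spines (e.g.\ $F_1+F_3+\cdots+F_{2p-1}=F_{2p}$) before the second use of the hypothesis --- a correct but heavier repair than the paper's one-term swap. Both versions are sound; what yours buys is a cleaner induction statement and an appealing, purely arithmetic transfer lemma. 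Indeed, since that lemma never mentions $G$ at all, you could even have invoked it \emph{inside} the induction to apply the canonical-case hypothesis to relaxed intermediates, which would eliminate the spine collapses and give the lightest version of your route.
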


For instance $G(11) = G(F_4+F_6) = F_3 + F_5 = 7$. Note that the
obtained decomposition isn't necessarily relaxed anymore, since a
$F_1$ might have appeared if $F_2$ was part of the initial
decomposition. The theorem also holds for canonical decompositions
since they are a particular case of relaxed decompositions. Once again,
the obtained decompositions might not be canonical since $F_1$ might
appear, but in this case we could turn this $F_1$ into a $F_2$. This
gives us a relaxed decomposition that we can re-normalize. For
instance $G(12) = G(F_2+F_4+F_6) = F_1+F_3+F_5 = F_2+F_3+F_5 = F_4+F_5
= F_6 = 8$.
 
\begin{proof}
We proceed by strong induction over $n$. The case $0$ is obvious,
since the only possible decomposition is the empty one, and the
``shifted'' decomposition is still empty, as required by $G(0)=0$.
We now consider a number $n\neq 0$, with a non-empty relaxed decomposition
$n = F_k+\fibrest$, and assume the statement to be true for all $m<n$.
We will use the recursive equation $G(n)=n-G(G(n-1))$, and distinguish
many cases according to the value of $k$.
\begin{itemize}
\item Case $k=2$. Then $n-1 = \fibrest$. A first induction
  hypothesis gives $G(n-1) = \Sigma F_{i-1}$. Since the initial
  decomposition was relaxed, $\fibrest$ cannot contain $F_2$,
  hence $\Sigma F_{i-1}$ is still a relaxed decomposition. As seen
  many times now, $G(n-1)<n$ here, and we're free to use a second
  induction hypothesis leading to $G(G(n-1)) = \Sigma F_{i-2}$.
  Finally $G(n) = 1+\Sigma F_i - \Sigma F_{i-2} = 1 + \Sigma (F_i-F_{i-2})
   = F_1 + \Sigma F_{i-1}$, as required.

\item Case $k$ even and distinct from 2. Then $k$ can be written $2p$ with $p>1$.
  Then $n-1 = F_{2p}-1 + \Sigma F_i = F_3+F_5+...+F_{2p-1}+\Sigma F_i$.
  The induction hypothesis on $n-1$ gives:
  $$G(n-1) = F_2+F_4+...+F_{2p-2}+\Sigma F_{i-1}$$
  Since this is still a relaxed decomposition, we use a second
  induction hypothesis on it, hence:
  $$G(G(n-1)) = F_1+F_3+...+F_{2p-3}+\Sigma F_{i-2}$$
  $$G(G(n-1)) = 1 + (F_{2p-2}-1)+\Sigma F_{i-2}$$
  And so:
  $$G(n) = F_{2p}-F_{2p-2} + \Sigma (F_i - F_{i-2}) =
           F_{2p-1} + \Sigma F_{i-1}$$

\item Case $k$ odd. Then $k$ can be written $2p+1$ with $p>0$.
  Then $n-1 = F_{2p+1}-1 + \Sigma F_i = F_2+F_4+...+F_{2p}+\Sigma F_i$.
  The induction hypothesis on $n-1$ gives:
  $$G(n-1) = F_1+F_3+...+F_{2p-1}+\Sigma F_{i-1}$$
  We turn the $F_1$ above into a $F_2$, obtaining again a relaxed
  decomposition, for which a second induction hypothesis gives:
  $$G(G(n-1)) = F_1+F_2+...+F_{2p-2}+\Sigma F_{i-2}$$
  $$G(G(n-1)) = 1 + (F_{2p-1}-1)+\Sigma F_{i-2}$$
  And so:
  $$G(n) = F_{2p+1}-F_{2p-1} + \Sigma (F_i - F_{i-2}) =
           F_{2p} + \Sigma F_{i-1}$$
\end{itemize}
\end{proof}

Thanks to this characterization of the effect of $G$ on Fibonacci
decomposition, we can now derive a few interesting properties.

\begin{theorem}\label{Gclass1}
\noindent
\begin{enumerate}
\item $G(n+1)=G(n)$ if and only if $low(n)=2$.
\item If $low(n)$ is odd, then $G(n-1)=G(n)$.
\item If $low(n)$ is even, then $G(n-1)=G(n)-1$.
\item For $n\neq 0$, the node $n$ of tree $G$
 is unary if and only if $low(n)$ is odd.
\end{enumerate}
\end{theorem}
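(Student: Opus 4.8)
The plan is to derive all four statements from the shift theorem (Theorem~\ref{Gshift}), which reads $G(n)$ directly off any relaxed decomposition of $n$, together with the predecessor result (Theorem~\ref{fibpred}). Statement~(1) is the real engine; (2) and (3) are immediate corollaries of it, and (4) follows after one extra computation.

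\emph{Statement (1).} I would fix the canonical decomposition $n = F_{c_1} + R$ with $low(n) = c_1$ and compare the shifts of $n$ and $n+1$. If $c_1 = 2$ then $F_2 = 1$, so $n+1 = F_3 + R$ is still a relaxed decomposition; the shift theorem gives $G(n) = F_1 + \widehat R$ and $G(n+1) = F_2 + \widehat R$, where $\widehat R$ denotes the shift of $R$, and these coincide because $F_1 = F_2 = 1$. If $c_1 \ge 3$ then $n+1 = F_2 + F_{c_1} + R$ is a relaxed decomposition obtained by merely prepending $F_2$, so the shift theorem gives $G(n+1) = F_1 + G(n) = 1 + G(n)$. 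Since $G(n+1)-G(n) \in \{0,1\}$ (Theorem~\ref{Gprops}), these two cases are mutually exclusive and exhaustive, which is exactly the claimed equivalence; the degenerate case $n=0$ is read off from $G(0)=0$ and $G(1)=1$.

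\emph{Statements (2) and (3).} I would simply invoke statement~(1) at $n-1$. By Theorem~\ref{fibpred}, $low(n)$ odd forces $low(n-1)=2$, so (1) yields $G(n)=G((n-1)+1)=G(n-1)$; and $low(n)$ even forces $low(n-1) \ne 2$ (it equals $3$ or exceeds $3$), so (1) gives $G(n) \ne G(n-1)$ and hence $G(n)=G(n-1)+1$ since the step is in $\{0,1\}$. The boundary $n=1$ is settled directly from the initial values.

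\emph{Statement (4).} Here I would use that the children of node $n$ are its antecedents under $G$, of which there are one or two (as noted before Theorem~\ref{Gonto}), the largest being $m = n+G(n)$; a second antecedent exists exactly when $m-1$ is one too, i.e.\ when $G(m-1)=G(m)$. The crucial observation is that $low(m)=low(n)+1$: writing $n=\Sigma F_{c_i}$ and applying the shift theorem, $n+G(n)=\Sigma F_{c_i}+\Sigma F_{c_i-1}=\Sigma F_{c_i+1}$, which is again canonical with lowest rank $c_1+1$. Applying statements~(2)--(3) to $m$ then shows $G(m-1)=G(m)$ iff $low(m)$ is odd, i.e.\ iff $low(n)$ is even; so $n$ is binary iff $low(n)$ is even, equivalently unary iff $low(n)$ is odd.

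I expect statement~(1) to be the main obstacle: the whole argument hinges on the bookkeeping of the increment, namely recognizing that for $low(n)=2$ the $+1$ is swallowed by the coincidence $F_1=F_2$ after shifting, whereas for $low(n)\ge 3$ it surfaces as a genuine new term $F_1=1$. The only remaining subtlety is at the root in statement~(4): node $1$ has antecedents $\{1,2\}$, so it counts as binary under the convention that arity counts antecedents (the self-edge $G(1)=1$ being omitted from the drawn tree), and this is the one place where the tree picture and the antecedent count must be reconciled.
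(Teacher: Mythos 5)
Your proposal is correct, and parts (1)--(3) follow exactly the paper's own route: the shift theorem applied to $n+1 = F_3 + R$ when $low(n)=2$ versus $n+1 = F_2 + F_{c_1} + R$ when $low(n)\ge 3$, and then Theorem~\ref{fibpred} to transfer the dichotomy to $n-1$. Part (4) is where you genuinely diverge. The paper stays inside the tree: for $low(n)$ odd it combines part (2) with Theorem~\ref{Gonto} to get $G(n+G(n)-1) = G\bigl((n-1)+G(n-1)\bigr) = n-1$, so the candidate second child of $n$ is in fact the rightmost child of $n-1$ and $n$ is unary; for $low(n)$ even it identifies $n+G(n)-1$ with the node immediately to the right of the rightmost child of $n-1$, whose image under $G$ is $n$, so $n$ is binary. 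You instead compute the canonical decomposition of $m = n+G(n)$ directly: by Theorem~\ref{Gshift}, $n + G(n) = \Sigma F_{c_i} + \Sigma F_{c_i-1} = \Sigma F_{c_i+1}$, which is again canonical, giving $low(m) = low(n)+1$, and then parts (2)--(3) applied at $m$ decide whether $G(m-1)=G(m)=n$. Both arguments are sound. Yours produces a reusable side fact --- $low(n+G(n)) = low(n)+1$, an upward-shift counterpart of Theorem~\ref{Glow} --- at the cost of one extra appeal to the shift theorem, whereas the paper's version needs nothing beyond Theorem~\ref{Gonto} and the already-established parts (2)--(3). Finally, your remark about node $1$ (binary under the antecedent count, with the self-loop $G(1)=1$ omitted from the drawn tree) addresses a boundary case the paper's prose glosses over, and your resolution agrees with the paper's working criterion that $n$ is binary exactly when $G(n+G(n)-1)=n$.
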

\begin{proof}
\noindent
\begin{enumerate}
\item
If $low(n)=2$, then we can write a canonical decomposition
$n=F_2+\fibrest$, hence
$n+1=F_3+\fibrest$ is a relaxed decomposition. By the previous
theorem, $G(n) = F_1 + \Sigma F_{i-1}$ while
$G(n+1) = F_2 + \Sigma F_{i-1}$, hence the desired equality.
Conversely, we consider a canonical decomposition of $n$ as $\fibrest$.
If $F_2$ isn't part of this decomposition, then $n+1=F_2+\fibrest$
is a correct relaxed decomposition, leading to
$G(n+1)=1+\Sigma F_{i-1}=1+G(n)$. So $low(n)\neq 2$ implies
$G(n+1)\neq G(n)$.

\item If $low(n)$ is odd, we've already shown in theorem \ref{fibpred}
  that $low(n-1)=2$ hence $G(n)=G(n-1)$ by the previous point.

\item If $low(n)$ is even, the same theorem
 \ref{fibpred} shows that $low(n-1)$ is 3 or more, provided that $n>1$.
 In this case the first point above allows to conclude. We now check separately
 the cases $n=0$ (irrelevant here since $low(0)$ doesn't exists) and
 $n=1$ (for which $G(1-1)$ is indeed $G(1)-1$).

\item We've already seen that a node $n$ is binary whenever 
$G(n+G(n)-1) = n$. When $low(n)$ is odd, we have
$G(n-1)=G(n)$, hence $G(n+G(n)-1) = G(n-1+G(n-1)) = n-1$, hence $n$ is
  unary. When $low(n)$ is even, we have
$n+G(n)-1 = (n-1+G(n-1))+1$ : this is the next node to the right after
the rightmost child of $n-1$, its image by $G$ is hence $n$, and
finally $n$ is indeed binary.
\end{enumerate}
\end{proof}

\begin{theorem}\label{Glow}
If $low(n)>2$, then $low(G(n))=low(n)-1$. In particular,
if $low(n)$ is odd, then $low(G(n))$ is even, and if $low(n)$
is even and different from 2, then $low(G(n))$ is odd.
\end{theorem}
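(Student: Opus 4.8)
The plan is to read the result directly off the rank-shift description of $G$ supplied by theorem~\ref{Gshift}. First I would use the hypothesis $low(n)>2$ to write the canonical decomposition of $n$ as $n = F_k + \fibrest$, where $k = low(n) \ge 3$ and, by canonicity, every rank appearing in $\fibrest$ is at least $k+2$. A canonical decomposition is in particular a relaxed one, so theorem~\ref{Gshift} applies verbatim and gives
$$G(n) = F_{k-1} + \Sigma F_{i-1},$$
the decomposition obtained by lowering every rank by one.

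The key step I would then carry out is to verify that this shifted decomposition is still \emph{canonical} and to identify its lowest rank. Lowering all ranks uniformly by one preserves every gap between consecutive terms, so the shifted terms remain pairwise non-adjacent (condition~(c)) and distinct (condition~(b)). The only way canonicity could be lost is if some rank fell below $2$, creating an illegal $F_0$ or $F_1$ and thereby violating condition~(a). This is exactly the caveat flagged after theorem~\ref{Gshift}, and it occurs precisely when the original lowest rank equals $2$. Here the hypothesis $low(n)>2$ is tailored to exclude it: the smallest rank after shifting is $k-1 \ge 2$, and every other shifted rank is larger still. Hence $F_{k-1}+\Sigma F_{i-1}$ is a genuine canonical decomposition of $G(n)$ with lowest term $F_{k-1}$, so by uniqueness (theorem~\ref{zeck}) we get $low(G(n)) = k-1 = low(n)-1$.

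The ``in particular'' clause then falls out immediately. If $low(n)$ is odd it is necessarily at least $3$, so $low(G(n)) = low(n)-1$ is even; if $low(n)$ is even and different from $2$ it is at least $4$, so $low(G(n)) = low(n)-1$ is odd. I would also dispatch the degenerate case where $\fibrest$ is empty, i.e. $n = F_k$: there theorem~\ref{Gfib} gives $G(F_k)=F_{k-1}$ directly, in agreement with the general computation.

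I expect no real obstacle here, since the substantive work was already done in theorem~\ref{Gshift}. The single point requiring care is the preservation of condition~(a) under the shift — guaranteeing that no $F_1$ is manufactured — and this is exactly what the hypothesis $low(n)>2$ secures; the rest is bookkeeping about gaps and the uniqueness of canonical decompositions.
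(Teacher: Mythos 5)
Your proposal is correct and follows exactly the paper's own route: a direct application of theorem~\ref{Gshift}, observing that when $low(n)>2$ the shifted decomposition remains canonical with lowest rank $low(n)-1$, and reading off the parity claims. Your write-up merely fills in the bookkeeping (preservation of the canonicity conditions and the appeal to uniqueness) that the paper leaves implicit.
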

\begin{proof}
This is a direct application of theorem \ref{Gshift}:
when $low(n)>2$, the decomposition we obtain for $G(n)$ is
still canonical, and its lowest rank is $low(n)-1$.
The statements about parity are immediate consequences.
\end{proof}

\begin{theorem}\label{Gclass2}
\noindent
\begin{enumerate}
\item If $low(n)=2$, then $low(G(n))$ is even.
\item If $low(n)=3$, then $low(G(n))=2$.
\item If $low(n)>3$, then $low(G(n))>2$ and $low(G(n)+1)$ is even.
\end{enumerate}
\end{theorem}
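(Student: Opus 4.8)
The plan is to lean entirely on the decomposition machinery already in place: Theorem~\ref{Gshift} (the rank-shifting effect of $G$), Theorem~\ref{Glow} (which lowers the lowest rank by one, and hence flips its parity, whenever $low(n)>2$), together with the normalization and successor Theorems~\ref{norm} and~\ref{fibsucc}. Under this approach parts (2) and (3) come out almost for free, while part (1) is where the genuine care is needed, because there $low(n)=2$ and Theorem~\ref{Glow} does not apply.

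For part (2), since $low(n)=3>2$, Theorem~\ref{Glow} immediately gives $low(G(n))=low(n)-1=2$, and there is nothing further to do. For part (3), the first claim is equally direct: $low(n)>3$ forces $low(G(n))=low(n)-1>2$ by the same theorem. For the second claim I would set $m=G(n)$, so that $low(m)=low(n)-1\ge 3$, and feed $m$ into the successor Theorem~\ref{fibsucc}. If $low(n)=4$ then $low(m)=3$, and case (2) of that theorem makes $low(m+1)$ even; if $low(n)>4$ then $low(m)>3$, and case (3) gives $low(m+1)=2$, again even. In either situation $low(G(n)+1)$ is even.

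The real content is in part (1). Here I would write the canonical decomposition $n=F_2+\fibrest$, in which every term of $\fibrest$ has rank at least $4$ by canonicity. Applying Theorem~\ref{Gshift} produces the shifted relaxed decomposition $G(n)=F_1+\Sigma F_{i-1}$, whose non-$F_1$ terms now have rank at least $3$. Since $F_1$ and $F_2$ share the value $1$, I would replace the forbidden $F_1$ by $F_2$, obtaining a legitimate relaxed decomposition of $G(n)$ whose lowest rank is exactly $2$ (no duplication occurs, as the remaining terms sit at rank $\ge 3$).

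The main obstacle is that this decomposition need not be canonical: if the smallest term of $\fibrest$ happened to be $F_4$, then $G(n)$ now contains the consecutive pair $F_2,F_3$, so I cannot read $low(G(n))$ off directly. I would close the gap with the normalization Theorem~\ref{norm}, which guarantees that the canonical form of $G(n)$ has lowest rank of the shape $2+2p$ with $p\ge 0$, hence even, exactly as claimed. The degenerate case $n=1$ (where $\fibrest$ is empty and $G(1)=1$ already has lowest rank $2$) is even and causes no trouble.
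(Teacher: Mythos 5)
Your proposal is correct and follows essentially the same route as the paper: parts (2) and (3) are read off from Theorem~\ref{Glow} combined with Theorem~\ref{fibsucc} (your explicit case split on $low(G(n))=3$ versus $low(G(n))>3$ is exactly what the paper's terser citation of that theorem amounts to), and part (1) uses the paper's own trick of shifting via Theorem~\ref{Gshift}, trading the resulting $F_1$ for an $F_2$, and invoking normalization (Theorem~\ref{norm}) to conclude the lowest rank is $2+2p$, hence even. Your added remarks on non-duplication and the degenerate case $n=1$ are harmless refinements of the same argument.
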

\begin{proof}
\noindent
\begin{enumerate}
\item Once again, we consider a canonical decomposition
$n = F_2 + \fibrest$. Then $G(n) = F_1 + \Sigma F_{i-1}$. If we
turn the $F_1$ into a $F_2$, we obtain a relaxed decomposition,
that can be normalized into a canonical decomposition whose
lowest rank will hence be even.
\item Direct application of previous theorem: $low(G(n))=low(n)-1$.
\item Here also, $low(G(n))=low(n)-1$, hence $low(G(n))$ cannot
be 2 here. Then the theorem \ref{fibsucc} implies that $low(G(n)+1)$ is even.
\end{enumerate}
\end{proof}

\begin{theorem}\label{Gthree}
\noindent
\begin{enumerate}
\item If $n$ is 3-even, then $G(n)+1$ is 3-odd.
\item If $n$ is 3-odd, then either $G(n)+1$ is 3-even or $low(G(n)+1)>3$.
\end{enumerate}
\end{theorem}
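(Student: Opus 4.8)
The plan is to reduce everything to the shift description of $G$ on Fibonacci decompositions (theorem \ref{Gshift}) combined with the relaxed characterization of 3-odd numbers (theorem \ref{threeevenodd}, item 1). In both parts $n$ is 3-even or 3-odd, so $low(n)=3$, and I would start from the canonical decomposition $n = F_3 + F_k + \fibrest$ and apply theorem \ref{Gshift}, which shifts every rank down by one: $G(n) = F_2 + F_{k-1} + \Sigma F_{i-1}$. Since $low(n)=3>2$, this shifted decomposition is again canonical with $low(G(n))=2$ (theorem \ref{Gclass2}, item 2). The only remaining work is to pass to $G(n)+1$, and here I would use the same trick as in theorem \ref{fibsucc}: because the lowest term of $G(n)$ is $F_2$, adding $1$ turns that $F_2$ into $F_3$, yielding the relaxed decomposition $G(n)+1 = F_3 + F_{k-1} + \Sigma F_{i-1}$.

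For part 1, $n$ is 3-even, so $k=2p$ is even with $k \ge 6$, hence $k-1 = 2p-1$ is odd and at least $5$. The relaxed decomposition $G(n)+1 = F_3 + F_{2p-1} + \Sigma F_{i-1}$ then has $F_3$ immediately followed by an odd-ranked term, and theorem \ref{threeevenodd} (item 1) lets me conclude directly that $G(n)+1$ is 3-odd, with no normalization needed.

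For part 2, $n$ is 3-odd, so $k=2p+1$ with $k\ge 5$, hence $k-1=2p$ is even and at least $4$, giving $G(n)+1 = F_3 + F_{2p} + \Sigma F_{i-1}$. I expect the case split on $p$ to be the main obstacle. When $2p\ge 6$ the terms $F_3$ and $F_{2p}$ are non-consecutive, the decomposition is already canonical, and $G(n)+1$ is 3-even by definition. When $2p=4$, however, $F_3$ and $F_4$ are consecutive and must be merged ($F_3+F_4=F_5$); this cascade of normalizations can only raise the lowest rank (theorem \ref{norm}), so the result satisfies $low(G(n)+1)\ge 5>3$, which is the second alternative of the statement. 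The delicate point is precisely this boundary case $k=5$: it is the reason the clean 3-even conclusion fails and the weaker ``$low>3$'' disjunct becomes necessary, so I would handle it with explicit care rather than fold it into the generic argument.
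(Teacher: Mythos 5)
Your proof is correct and follows essentially the same route as the paper: shift the canonical decomposition $n=F_3+F_k+\fibrest$ via theorem \ref{Gshift}, absorb the $+1$ by turning the resulting $F_2$ into $F_3$, then either read off the decomposition of $G(n)+1$ directly or, in the boundary case where $F_3+F_4$ appears, normalize to get a lowest rank of at least $5$. The only cosmetic differences are that you invoke the relaxed characterization of 3-odd numbers (theorem \ref{threeevenodd}) in part 1 where the paper simply observes the shifted decomposition is already canonical, and that your boundary case is indexed correctly ($2p=4$, i.e.\ second-lowest rank $5$ in $n$) where the paper's corresponding case label suffers from an off-by-one.
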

\begin{proof}
\noindent
\begin{enumerate}
\item Take a canonical decomposition $n=F_3+F_{2p}+\fibrest$.
Hence $G(n)+1=1+F_2+F_{2p-1}+\Sigma F_{i-1}=F_3+F_{2p-1}+\Sigma F_{i-1}$,
and this decomposition is still canonical (for canonicity reasons,
$p>2$).
\item Similarly, $n=F_3+F_{2p+1}+\fibrest$ implies
$G(n)+1=F_3+F_{2p}+\Sigma F_{i-1}$. When $p>1$, this decomposition
is canonical and $G(n)+1$ is 3-even. When $p=1$, this decomposition
is only a relaxed one, starting by $F_3+F_4+...$. Its normalization
will hence end with a lowest rank of at least 5.
\end{enumerate}
\end{proof}

\subsection{$G$ and its ``derivative'' $\Delta G$}
\label{deltaG}

We consider now the ``derivative'' $\Delta G$ of $G$, defined via
$\Delta G(n) = G(n+1)-G(n)$. We already know from theorem
\ref{Gprops} that the output of $\Delta G$ is always either 0 or 1.

\begin{theorem}\label{Gdelta}
For all $n$, $\Delta G(n+1) = 1 - \Delta G(n).\Delta G(G(n))$.
\end{theorem}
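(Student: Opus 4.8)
The plan is to expand both $G(n{+}2)$ and $G(n{+}1)$ using the defining recursive equation from Theorem~\ref{defG}, and then isolate the composed term $G(G(\cdot))$. Since $n+1$ and $n+2$ are both strictly positive for every $n\in\mathbb{N}$, the equation $G(m)=m-G(G(m-1))$ applies to both, giving $G(n{+}2)=(n{+}2)-G(G(n{+}1))$ and $G(n{+}1)=(n{+}1)-G(G(n))$. Subtracting these yields the key intermediate identity
$$\Delta G(n{+}1) = G(n{+}2)-G(n{+}1) = 1 - \bigl(G(G(n{+}1)) - G(G(n))\bigr).$$
Comparing this with the target equation, everything reduces to establishing the single claim
$$G(G(n{+}1)) - G(G(n)) = \Delta G(n)\cdot\Delta G(G(n)).$$

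To prove this claim I would argue by case analysis on the value of $\Delta G(n)$, which is either $0$ or $1$ by Theorem~\ref{Gprops}(3). If $\Delta G(n)=0$, then $G(n{+}1)=G(n)$, so the left-hand difference $G(G(n{+}1))-G(G(n))$ is trivially $0$, and the right-hand product is $0$ as well. If $\Delta G(n)=1$, then $G(n{+}1)=G(n)+1$, so the left-hand side becomes $G(G(n){+}1)-G(G(n))$, which is exactly $\Delta G(G(n))$ by the definition of $\Delta G$; meanwhile the right-hand product is $1\cdot\Delta G(G(n))$. In both cases the two sides agree, which settles the claim. Substituting it back into the intermediate identity gives $\Delta G(n{+}1) = 1 - \Delta G(n)\cdot\Delta G(G(n))$, as desired.

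The argument is essentially a short computation followed by a two-case split, so there is no deep obstacle. The only point requiring a little care is recognizing \emph{in advance} that the product $\Delta G(n)\cdot\Delta G(G(n))$ is precisely the quantity that captures the telescoping difference $G(G(n{+}1))-G(G(n))$: the factor $\Delta G(n)$ acts as a ``switch'' that is $0$ when the inner argument does not advance and $1$ when it advances by exactly one step, in which case the outer difference collapses to a single increment $\Delta G(G(n))$. Once this is noticed, the verification is immediate. I would also note explicitly that the recursive equation is legitimately applicable here because its precondition $m>0$ holds for $m=n{+}1$ and $m=n{+}2$ regardless of $n$, so no separate base case is needed.
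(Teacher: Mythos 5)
Your proposal is correct and follows essentially the same route as the paper: expand $G(n+2)$ and $G(n+1)$ via the defining recursion and split on $\Delta G(n)\in\{0,1\}$. The only (mildly advantageous) difference is that you perform the expansion unconditionally, so the case $\Delta G(n)=0$ falls out of the algebra directly, whereas the paper handles that case by invoking the previously proved ``stagnation is followed by growth'' fact (Theorem~\ref{Gprops}, item 2).
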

\begin{proof}
We already know that $\Delta G(n)=0$ implies $\Delta G(n+1)=1$:
we cannot have $G(n)=G(n+1)=G(n+2)$.
Consider now a $n$ such that
$G(n+1)-G(n)=1$. By using the recursive definition of $G$, we have
$G(n+2)-G(n+1)=(n+2-G(G(n+1)))-(n+1-G(G(n))) = 1 - (G(G(n+1))-G(G(n))$.
Hence $\Delta G(n+1) = 1 - (G(G(n)+1)-G(G(n)) = 1 - \Delta G(G(n))$.
\end{proof}

This equation provides a way to express $G(n+2)$ in terms of
$G(n+1)$ and $G(n)$ and $G(G(n))$ and $G(G(n)+1)$. Since
$n+1$, $n$, $G(n)$ and $G(n)+1$ are all strictly less than $n+2$,
we could use this equation as an alternative way to define
recursively $G$,
alongside two initial equations $G(0)=0$ and $G(1)=1$.
We proved in Coq that $G$ is indeed the unique function to
satisfy these equations (see {\tt GD\_unique} and
{\tt  g\_implements\_GD}).

\section{The $\FG$ function}

This section corresponds to file \doc{FlipG}.
Here is a quote from page 137 of Hofstadter's book \cite{GEB}:
\begin{quote}
A problem for curious readers is: suppose you flip Diagram G
around as if in a mirror, and label the nodes of the new tree so they
increase from left to right. Can you find a recursive \emph{algebraic}
definition for this ``flip-tree'' ?
\end{quote}

\subsection{The $\flip$ function}

Flipping the $G$ tree as if in a mirror is equivalent to keeping
its shape unchanged, but labeling the nodes from right to left
during the breadth-first traversal. Let us call $\flip(n)$ the
new label of node $n$ after this transformation.
We've seen in the previous section that given a depth $k\neq 0$,
the nodes at this depth are labeled from $1+F_{k+1}$ till $F_{k+2}$.
After the $\flip$ transformation, the $1+F_{k+1}$ node and
the $F_{k+2}$ node will hence have exchanged their label.
Similarly, $2+F_{k+1}$ will become $F_{k+2}-1$ and vice-versa.
More generally, if $\depth(n)=k$, the distance between
$\flip(n)$ and the leftmost node $1+F_{k+1}$ will be equal to
the distance between $n$ and the rightmost node $F_{k+2}$,
hence:
$$\flip(n) - (1+F_{k+1}) = F_{k+2} - n$$
So:
$$\flip(n) = 1+F_{k+3}-n$$
We finally complete this definition to handle the case $n\le 1$:
\begin{definition}
We define the function $\flip : \mathbb{N}\to\mathbb{N}$
in the following way:
$$\flip(n) = if~(n\le 1)~then~n~else~1+F_{3+\depth(n)}-n$$
\end{definition}

A few properties of this $\flip$ function:
\begin{theorem}\label{flipprops}
\noindent
\begin{enumerate}
\item For all $n\in\mathbb{N}$, $\depth(\flip(n)) = \depth(n)$.
\item For all $n$, $n>1$ if and only if $\flip(n)>1$.
\item For $1 \le n \le F_{k}$, we have
 $\flip(F_{k+1}+n) = 1+F_{k+2}-n$.
\item $\flip$ is involutive.
\item For all $n>1$, if $\depth(n+1)=\depth(n)$ then
  $\flip(n+1)=\flip(n)-1$.
\item For all $n>1$, if $\depth(n-1)=\depth(n)$ then
  $\flip(n-1)=\flip(n)+1$.
\end{enumerate}
\end{theorem}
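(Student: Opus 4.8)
The plan is to reduce every item to the explicit formula $flip(n) = 1+F_{3+depth(n)}-n$, valid for $n>1$, and to exploit the Fibonacci characterization of depth from Theorem~\ref{depthprops}(6): for $k>0$, the equality $depth(n)=k$ is equivalent to $1+F_{k+1}\le n\le F_{k+2}$. The single arithmetic fact doing all the work is $F_{k+3}-F_{k+1}=F_{k+2}$.

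First I would prove item 1, which is the heart of the matter. For $n\le 1$ we have $flip(n)=n$ and there is nothing to do. For $n>1$, set $k=depth(n)>0$, so that $1+F_{k+1}\le n\le F_{k+2}$. Then $flip(n)=1+F_{k+3}-n$, and the identity above shows that $flip$ sends the endpoints $1+F_{k+1}$ and $F_{k+2}$ to $F_{k+2}$ and $1+F_{k+1}$ respectively; being affine and decreasing in $n$, it maps the whole interval $[1+F_{k+1},F_{k+2}]$ onto itself. In particular $flip(n)$ again lies in this interval, so $depth(flip(n))=k=depth(n)$ by Theorem~\ref{depthprops}(6). Item 2 is then immediate: by Theorem~\ref{depthprops}(1) we have $m>1 \iff depth(m)>0$, and since $depth(flip(n))=depth(n)$ the conditions $n>1$ and $flip(n)>1$ are equivalent.

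Item 3 is a direct computation once the depth is pinned down. For $1\le n\le F_k$, the number $m=F_{k+1}+n$ satisfies $1+F_{k+1}\le m\le F_{k+1}+F_k=F_{k+2}$, so $depth(m)=k$ by Theorem~\ref{depthprops}(6), and $m\ge 2$ so the formula applies; substituting gives $flip(m)=1+F_{k+3}-(F_{k+1}+n)=1+F_{k+2}-n$. For item 4 (involutivity), the case $n\le 1$ is trivial; for $n>1$ items 1 and 2 give $flip(n)>1$ with $depth(flip(n))=depth(n)=k$, so applying the formula twice telescopes to $flip(flip(n))=1+F_{k+3}-(1+F_{k+3}-n)=n$.

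Finally, items 5 and 6 follow by subtracting two instances of the formula at the same depth. If $depth(n+1)=depth(n)=k$ with $n>1$, then both arguments exceed 1 and $flip(n+1)-flip(n)=-1$, giving item 5. Symmetrically, if $depth(n-1)=depth(n)=k$ with $n>1$, then $n-1>1$ as well---otherwise $depth(n-1)=0\neq k$, contradicting the hypothesis---so the formula applies to both arguments and yields $flip(n-1)-flip(n)=+1$. The only genuine subtlety throughout is boundary bookkeeping at $n\le 1$, where the piecewise definition of $flip$ forces me to check that the hypotheses of items 5 and 6 already exclude the degenerate arguments; the arithmetic itself is routine, resting entirely on $F_{k+3}-F_{k+1}=F_{k+2}$ and the depth-interval characterization.
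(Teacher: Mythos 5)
Your proposal is correct and follows essentially the same route as the paper: reduce every item to the explicit formula $flip(n)=1+F_{3+depth(n)}-n$ together with the interval characterization $1+F_{k+1}\le n\le F_{k+2}$ of depth $k$, using $F_{k+3}-F_{k+1}=F_{k+2}$ for the endpoint arithmetic. Your explicit boundary check in items 5 and 6 (that the hypotheses force $n\pm 1>1$) is a small point the paper glosses over, but the argument is the same.
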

\begin{proof}
\noindent
\begin{enumerate}
\item If $n\le 1$, then $\flip(n)=n$ and the property is obvious.
For $n>1$, if we name $k$ the depth of $n$, we have
$\flip(n) = 1+F_{k+3}-n$. Since $1+F_{k+1} \le n \le F_{k+2}$,
we hence have $1+F_{k+3}-F_{k+2} \le \flip(n) \le 1+F_{k+3}-1-F_{k+1}$.
And finally $1+F_{k+1} \le \flip(n) \le F_{k+2}$, and this characterizes
the nodes at depth $k$, hence $\depth(\flip(n))=k=\depth(n)$.
\item We know that $n$ and $\flip(n)$ have the same depth, and
we've already seen that being less or equal to 1 is equivalent
to having depth 0.
\item Consider $1 \le n \le F_{k}$. Hence $k$ is at least 1,
otherwise $1 \le F_0 = 0$. We have
$1+F_{k+1}\le F_{k+1}+n \le F_{k+1}+F_{k} = F_{k+2}$, so the depth
of $F_{k+1}+n$ is $k$
(still via the same characterization of nodes at depth $k$).
Moreover $F_{k+1}+n$ is more than 2,
so the definition of $\flip$ gives:
$\flip(F_{k+1}+n)=1+F_{k+3}-F_{k+1}-n = 1+F_{k+2}-n$.
\item If $n\le 1$ then $\flip(n)=n$ is still less or equal to 1,
so $\flip(\flip(n))=\flip(n)=n$.
Consider now $n>1$. We already know that $\flip(n)>1$ and
$\flip(n)$ and $n$ have same depth (let us name it $k$). Hence:
$\flip(\flip(n)) = 1+F_{k+2}-\flip(n)=1+F_{k+2}-(1+F_{2+k}-n) = n$.
\item Let us name $k$ the common depth of $n+1$ and $n$.
In these conditions
$\flip(n+1) = 1+F_{k+2}-(n+1) = (1+F_{k+2}-n)-1 = \flip(n)-1$.
\item Similar proof.
\end{enumerate}
\end{proof}

In particular, the third point above shows that for $k>1$
we indeed have $\flip(1+F_k) = F_{k+1}$ (and vice-versa since $\flip$
is involutive).

\subsection{Definition of $\FG$ and initial properties}

We can now take advantage of this $\flip$ function to obtain
a first definition of the $\FG$ function, corresponding to the
flipped $G$ tree:

\begin{definition}
The function $\FG : \mathbb{N}\to\mathbb{N}$ is defined via:
$$\FG(n) = \flip(G(\flip(n))$$
\end{definition}

We benefit from the involutive aspect of $\flip$ to switch
from right-left to left-right diagrams, use $G$, and then
switch back to right-left diagram. The corresponding Coq function
is named {\tt fg}.

By following this definition, the initial values of $\FG$ are
$\FG(0)=0$, $\FG(1)=\FG(2)=1$, $\FG(3)=2$. The first difference
between $\FG$ and $G$ appears for $\FG(7)=5=G(7)+1$. We'll dedicate a
whole section later on the comparison between $\FG$ and $G$.
Here is the initial levels of this flipped tree $\FG$,
the boxed values being the places where $\FG$ and $G$ differ.

\bigskip

\begin{tikzpicture}[grow'=up]
\Tree
 [.1 [.2 [.3
       [.4 [.6 [.9 [.14 22 23 ] ]
               [.10 [.\fbox{15} 24 ] 
                    [.16 25 26 ]]]]
       [.5 [.\fbox{7} [.11 [.17 27 ] [.18 \fbox{28} 29 ]]]
           [.8 [.12 [.19 30 31 ] ]
               [.13 [.\fbox{20} 32 ]
                    [.21 33 34 ]]]]]]]
\end{tikzpicture}

We now show that $\FG$ enjoys the same basic properties as $G$:

\begin{theorem}\label{FGprops}
\noindent
\begin{enumerate}
\item For all $n>1$, $\depth(\FG(n)) = \depth(n)-1$.
\item For all $k>0$, $\FG(F_{k+1}) = F_k$.
\item For all $k>1$, $\FG(1+F_{k+1}) = 1+F_k$.
\item For all $n$, $\FG(n+1)-\FG(n) \in \{0,1\}$.
\item For all $n,m$, $n\le m$ implies $0 \le \FG(m)-\FG(n) \le m-n$.
\item For all $n$, $0 \le \FG(n) \le n$.
\item For all $n$, $\FG(n)=0$ if and only if $n=0$
\item For all $n>1$, $\FG(n)<n$.
\item For all $n\neq 0, \FG(n)=\FG(n-1)$ implies $\FG(n+1)=\FG(n)+1$.
\item $\FG$ is onto.
\end{enumerate}
\end{theorem}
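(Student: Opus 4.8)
The plan is to exploit the defining identity $\FG(n)=flip(G(flip(n)))$ together with the three structural facts about $flip$ already in hand: it is an involution and a bijection (theorem \ref{flipprops}, item 4), it preserves depth (item 1), and on each depth level it \emph{reverses} order, sending a node $n$ at depth $k$ to $1+F_{k+3}-n$, so that any two nodes $a\le b$ at the same depth satisfy $flip(a)-flip(b)=b-a$. The recurring move will be to conjugate a statement about $G$ through these facts, paying attention to the fact that $flip$ is order-reversing \emph{within} a level but is not globally monotone.

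I would first dispatch the three landmark identities. For item 1, applying item 1 of theorem \ref{flipprops} twice reduces $depth(\FG(n))$ to $depth(G(flip(n)))$; since $n>1$ forces $flip(n)>1$ (item 2), the relation $depth(m)=1+depth(G(m))$ gives $depth(\FG(n))=depth(flip(n))-1=depth(n)-1$. For items 2 and 3 I would rewrite $flip(F_{k+1})=1+F_k$ and $flip(1+F_{k+1})=F_{k+2}$ using the known $flip(1+F_k)=F_{k+1}$, push these through $G(1+F_k)=1+F_{k-1}$ and $G(F_{k+2})=F_{k+1}$ (theorem \ref{Gfib} and its corollary), and flip back; the few smallest indices are checked by hand.

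The crux is item 4, and this is where the main obstacle lies: because $flip$ is not globally monotone, the monotonicity and bounded growth of $G$ cannot simply be transported to $\FG$. I would split on whether $n$ and $n+1$ share a depth. In the generic case they do, so $flip(n+1)=flip(n)-1$ (item 5 of theorem \ref{flipprops}); writing $m=flip(n)$, the values $G(m-1)$ and $G(m)$ lie one depth lower and differ by $0$ or $1$ (theorem \ref{Gprops}, item 3), and applying the order-reversing $flip$ on that level turns this into $\FG(n+1)-\FG(n)=G(m)-G(m-1)\in\{0,1\}$. If instead $n$ and $n+1$ straddle a level boundary, then $n=F_{k+2}$ and $n+1=1+F_{k+2}$, and items 2 and 3 give $\FG(n+1)-\FG(n)=1$ directly. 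Items 5--8 are then routine consequences: item 5 iterates item 4 between $n$ and $m$, while items 6, 7, 8 follow by combining monotonicity with the base values $\FG(0)=0$ and $\FG(1)=\FG(2)=1$.

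For item 9 I would again separate the two cases. If $n-1$ and $n$ straddle a boundary, $\FG$ grows by $1$ there, so the hypothesis $\FG(n-1)=\FG(n)$ is impossible and the statement is vacuous; otherwise $n-1,n$ share a depth, and translating through $flip$ as above shows $\FG(n-1)=\FG(n)$ is equivalent to $G$ being flat at $m=flip(n)$, i.e. $G(m+1)=G(m)$. By item 2 of theorem \ref{Gprops} this forces $G(m)=G(m-1)+1$, which (treating the boundary between $n$ and $n+1$ as above, and using the within-level reversal otherwise) translates back to $\FG(n+1)=\FG(n)+1$; the smallest values of $n$ are verified directly. Finally item 10 follows formally: by item 4 the function $\FG$ is monotone with steps in $\{0,1\}$, it starts at $\FG(0)=0$, and it tends to infinity since $\FG(F_{k+1})=F_k$ by item 2, so it attains every natural number.
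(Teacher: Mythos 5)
Your proposal is correct and, for items 1--8 and 10, follows essentially the same route as the paper: conjugate statements about $G$ through $flip$, using depth preservation, the within-level reversal $flip(a+1)=flip(a)-1$, and the boundary identities of items 2--3 to handle the case $n=F_{k+2}$. The one place you genuinely diverge is item 9. The paper argues by contradiction: assuming $\FG(n-1)=\FG(n)=\FG(n+1)$, injectivity of $flip$ forces $G$ to take equal values at $flip(n-1)$, $flip(n)$, $flip(n+1)$, and a three-way case analysis on depths shows these arguments are either three consecutive integers (so $G$ would be flat twice in a row) or include a pair that is neither equal nor consecutive (which $G$, having at most two antecedents per value, cannot identify) --- contradiction either way. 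You instead argue directly: flatness of $\FG$ at $(n-1,n)$ translates to $G(m+1)=G(m)$ with $m=flip(n)$; the contrapositive of theorem \ref{Gprops}(2), combined with the step-size bound, yields $G(m)=G(m-1)+1$; and flipping back (or invoking items 2--3 at a boundary) gives $\FG(n+1)=\FG(n)+1$. Your version avoids the ``non-consecutive arguments cannot share a $G$-value'' ingredient entirely, which is a mild simplification. One caution: your claim that the hypothesis is vacuous whenever $n-1$ and $n$ straddle a depth boundary fails precisely at $n=2$, since $\FG(1)=\FG(2)=1$ is a flat step across the depth-$0$/depth-$1$ boundary (item 3 requires $k>1$ and so does not apply there); hence the direct verification of small values that you mention only in passing is genuinely indispensable, exactly as the paper's explicit check of $n=2$ and $n=3$.
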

\begin{proof}
\noindent
\begin{enumerate}
\item Since $\flip$ doesn't modify the depth, we reuse the
result about the depth of $G(n)$.
\item $\FG(F_{k+1}) = \flip(G(\flip(F_{k+1}))) = \flip(G(1+F_k)) =
       \flip(1+F_{k-1}) = F_k$.
\item Similar proof.
\item The property is true for $n=0$ and $n=1$.
Consider now $n>1$. Let $k$ be $\depth(n)$, which is
hence different from 0. We have $1+F_{k+1} \le n \le F_{k+2}$.
If $n$ is $F_{k+2}$, we have already seen that
$\FG(1+F_{k+2}) = 1+F_{k+1} = 1+\FG(F_{k+2})$. Otherwise $n < F_{k+2}$
and $n+1$ has hence the same depth as $n$. Then
$\flip(n+1)=\flip(n)-1$. Now, by applying $G$ on the two consecutive values
of same depth $\flip(n)$ and $\flip(n)-1$, we obtain two results that
are either equal or consecutive, and of same depth. By applying $\flip$
again on these $G$ results, we end on two consecutive or equal
$\FG$ results.
\item Iteration of the previous results between $n$ and $m$.
\item Thanks to the previous point, $0 \le \FG(m)-\FG(0) \le m-0$
and $\FG(0)=0$.
\item $\FG(1)=1$ and $0 \le \FG(m)-\FG(1)$ as soon as $m\ge 1$.
\item $\FG(2)=1$ and $\FG(m)-\FG(2) \le m-2$ as soon as $m \ge 2$.
\item If $n=1$, then the precondition $\FG(1)=\FG(0)$ isn't
  satisfied. If $n=2$ or $n=3$, the conclusion is satisfied:
  $\FG(3)=2=\FG(2)+1$ and $\FG(4)=3=\FG(3)+1$.
  We now consider $n>3$ such that
  $\FG(n)=\FG(n-1)$. We have already seen that $\FG(n+1)-\FG(n)$
  is either 0 or 1. If it is 1, we can conclude. We proceed by
  contradiction and suppose it is 0. We hence have
  $\FG(n-1)=\FG(n)=\FG(n+1)$, or said otherwise
  $\flip(G(\flip(n-1))) = \flip(G(\flip(n))) = \flip(G(\flip(n+1)))$.
  Since $\flip$ is involutive hence bijective, this implies that
  $G(\flip(n-1))=G(\flip(n))=G(\flip(n+1))$. To be able to commute
  $\flip$ and the successor/predecessor, we now study the depth
  of these values.
  \begin{itemize}
  \item If $n+1$ has a different depth than $n$, then
    $n$ is a Fibonacci number $F_{k}$, with $k>4$ since $n>3$.
    Then $\flip(n+1)=F_{k+1}$ and $\flip(n)=1+F_{k-1}$. These values
    aren't equal nor consecutive, hence $G$ cannot give the same
    result for them, this situation is indeed contradictory.
  \item Similarly, if $n-1$ has a different depth than $n$, then
    $n-1$ is a Fibonacci number $F_k$, with $k>3$ since $n>3$.
    Then $\flip(n-1)=1+F_{k-1}$ while $\flip(n)=F_{k+1}$.
    Once again, these values aren't equal nor consecutive,
    hence $G$ cannot give the same result for them. Contradiction.
  \item In the last remaining case, 
    $\depth(n-1)=\depth(n)=\depth(n+1)$.
    Then $\flip(n+1)=\flip(n)-1$ and $\flip(n-1)=\flip(n)+1$ and
    these value are at distance 2, while having the
    same result by $G$, which is contradictory.
  \end{itemize}
\item We reason as we did earlier for $G$: $\FG$ starts with
$\FG(0)=0$, it grows by steps of 0 or 1, and it grows by at
least 1 every two steps. Hence its limit is $+\infty$ and it is
an onto function.
\end{enumerate}
\end{proof}

\subsection{An recursive algebraic definition for $\FG$}

The following result is an answer to Hofstadter's problem.
It was already mentioned on OEIS page \cite{OEIS-FG}, but only as
a conjecture.

\newcommand{\nn}{\overline{n}}

\begin{theorem}\label{FGeqn}
For all $n>3$ we have $\FG(n) = n+1 - \FG(1+\FG(n-1))$.
And $\FG$ is uniquely characterized by this equation
plus initial equations $\FG(0)=0$, $\FG(1)=\FG(2)=1$ and
$\FG(3)=2$.
\end{theorem}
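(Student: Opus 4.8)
The equation rewrites as $\FG(n) + \FG(1+\FG(n-1)) = n+1$, and the plan is to derive it from the companion equation for $G$, namely Theorem~\ref{Galt}, which states $G(m) + G(G(m+1)-1) = m$, by conjugating through $flip$. I would set $m = flip(n)$ and $d = depth(n) = depth(m)$, so that by definition $\FG(n) = flip(G(m))$ and $flip(\FG(n)) = G(m)$. The idea is to rewrite each successor or predecessor across $flip$ using Theorem~\ref{flipprops}(5,6), which permit commuting $flip$ with $\pm 1$ as long as the two consecutive arguments share a depth. Concretely, in the generic case where $n-1$ has the same depth as $n$, Theorem~\ref{flipprops}(6) gives $flip(n-1) = m+1$, whence $\FG(n-1) = flip(G(m+1))$. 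Writing $b = G(m+1)$, one further application of the same lemma turns $1+\FG(n-1) = 1+flip(b)$ into $flip(b-1)$, so that $\FG(1+\FG(n-1)) = flip(G(b-1)) = flip(G(G(m+1)-1))$.

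At this point Theorem~\ref{Galt} applies directly: with $u = G(m)$ and $v = G(G(m+1)-1)$ we have $u + v = m = flip(n)$, and it remains to evaluate $flip(u) + flip(v) = \FG(n) + \FG(1+\FG(n-1))$. Here the crux is depth bookkeeping: $u$ lives at depth $d-1$ while $v$ lives at depth $d-2$ (each application of $G$ lowers the depth by one, since $depth(G(x)) = depth(x)-1$ for $x>1$), so $flip(u) = 1 + F_{d+2} - u$ and $flip(v) = 1 + F_{d+1} - v$. Adding these and using $F_{d+2} + F_{d+1} = F_{d+3}$ together with $flip(n) = 1 + F_{d+3} - n$ collapses the Fibonacci terms and yields exactly $n+1$. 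The delicate part of this half is the handful of boundary cases in which consecutive integers straddle two depth levels, precisely when $n$, $\FG(n-1)$, or $G(m+1)$ is a Fibonacci number or the successor of one. In each such case the commutation lemmas fail, but the quantities involved are then explicit, so the identity is checked directly from $\FG(F_{k+1}) = F_k$ and $\FG(1+F_{k+1}) = 1+F_k$ (Theorem~\ref{FGprops}(2,3)); for instance when $n = 1+F_{d+1}$ both sides equal $1+F_d$.

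For uniqueness I would argue as in the uniqueness half of Theorem~\ref{defG}. The only genuine worry is that the equation defines $f(n)$ in terms of $f(1+f(n-1))$, whose argument must be shown to lie strictly below $n$ for the recursion to be well-founded and to pin $f$ down. So I would first prove, by strong induction, that any $f$ satisfying the four initial equations and the recurrence obeys $0 \le f(k) \le k$ for all $k$ and $f(k) < k$ for $k>1$ (the lower bound $f(k)\ge 1$ for $k\ge 1$ being needed to deduce the upper bound $f(n)\le n$ from the recurrence). Granting these bounds, for $n>3$ we get $f(n-1) \le n-2$, hence $1+f(n-1) \le n-1 < n$, so both $f(n-1)$ and $f(1+f(n-1))$ are values at arguments strictly below $n$. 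A second strong induction then shows $f = \FG$: the base cases $n\le 3$ are the stipulated initial values, and for $n>3$ the induction hypothesis gives $f(n-1) = \FG(n-1)$ and $f(1+f(n-1)) = \FG(1+\FG(n-1))$, so $f(n) = n+1 - \FG(1+\FG(n-1)) = \FG(n)$ by the equation just established.

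I expect the main obstacle to be the depth bookkeeping in the first half: tracking which depth level each of $n-1$, $\FG(n-1)$, $1+\FG(n-1)$, $m+1$, $G(m+1)$ and $G(m+1)-1$ inhabits, so as to know exactly when the $flip$/successor commutation is licit, and then organizing the resulting boundary cases without omission. The pleasant surprise that makes the generic computation succeed is that the two relevant summands sit at consecutive depths $d-1$ and $d-2$, so their $flip$-offsets $F_{d+2}$ and $F_{d+1}$ add up to $F_{d+3}$ and cancel cleanly against $flip(n)$.
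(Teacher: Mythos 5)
Your first half is essentially the paper's own proof: you conjugate Theorem~\ref{Galt} through $flip$, with the same depth bookkeeping (summands at depths $d-1$ and $d-2$, whose offsets satisfy $F_{d+2}+F_{d+1}=F_{d+3}$ and cancel against $flip(n)=1+F_{d+3}-n$), and the same split into one generic case and two boundary cases. One gloss worth flagging: in the boundary case where $\FG(n-1)$ is the Fibonacci number $F_{d+1}$ (equivalently $G(m+1)=1+F_d$), the quantities are \emph{not} yet all explicit --- you still have to argue that $n$ itself equals $F_{d+2}$. The paper does this by noting that $\FG(F_{d+2})=F_{d+1}$ too, so $n-1$ and $F_{d+2}$ are antecedents of the same value, hence at distance at most $1$, while $depth(n)=d$ forces $n\le F_{d+2}$, whence $n=F_{d+2}$. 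With that added, this half is correct.

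The uniqueness half contains a real (though fixable) gap: your bounds induction does not close as stated. Writing $k=1+f(n-1)$, the recurrence gives $f(n)=n+1-f(k)$, and the strict bound $f(n)<n$ requires $f(k)\ge 2$; but your invariant only supplies $f(k)\ge 1$, and indeed $f(2)=1$, so nothing in your hypotheses rules out $f(n-1)=1$ (hence $k=2$ and $f(n)=n$) when $n-1\ge 3$. You must strengthen the invariant, e.g.\ to $2\le f(k)<k$ for all $k\ge 3$, which does propagate: then $3\le k\le n-1$, so $2\le f(k)\le n-2$, giving $3\le f(n)\le n-1$. Note the strict bound is needed even for well-foundedness --- without $f(n-1)<n-1$ you only get $k\le n$ and cannot invoke the strong induction hypothesis at $k$. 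The paper sidesteps this entire lemma with a cleaner order of steps: in the uniqueness induction it first rewrites $F(n-1)=\FG(n-1)$ using the induction hypothesis at $n-1$, and then justifies the second use of the induction hypothesis at $1+\FG(n-1)$ by the \emph{already established} bound $\FG(n-1)<n-1$, a property of $\FG$ rather than of the unknown function $F$; no a priori bounds on $F$ are ever needed.
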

\begin{proof}
We consider $n>3$. Let $k$ be its depth, which is hence at
least 3. We know that $1+F_{k+1} \le n \le F_{k+2}$. We will note
$\flip(n)$ below as $\nn$. Roughly speaking, this proof is
essentially a use of theorem \ref{Galt} which implies that
$G(G(\nn+1)-1) = \nn - G(\nn)$, and then some
play with flip and predecessors and successors, when possible,
or direct particular proofs otherwise.
\begin{itemize}
\item We start with the most general case : we suppose here
$\depth(n-1)=\depth(n)$ and $\depth(\FG(n-1)+1)=\depth(\FG(n-1)$.
Let us shorten $\FG(n-1)+1$ as $p$. In particular
$\depth(p)=\depth(\FG(n-1))=\depth(n-1)-1=\depth(n)-1=k-1$.
Due to the equalities between depths, and the facts that
$n>1$ and $\FG(n-1)>1$, we're allowed to use the
last properties of theorem \ref{flipprops} : $\flip(n-1)=\nn+1$ and
\begin{align*}
\flip(p) &= \flip(\FG(n-1)+1) \\
        &= \flip(\FG(n-1))-1 \\
        &= \flip(\flip(G(\flip(n-1))))-1 \\
        &= G(\flip(n-1))-1 \\
        &= G(\nn+1)-1
\end{align*}
We now exploit the definition of $\flip$ three times:
\begin{enumerate}
\item $\depth(n)=k\neq 0$ implies $\nn=1+F_{k+3}-n$
\item $\depth(G(\flip(p)))=\depth(p)-1=k-2 \neq 0$ hence:
$$\FG(p) = \flip(G(\flip(p))) = 1+F_{k+1}-G(\flip(p))$$
\item $\depth(G(\nn))=\depth(\nn)-1=\depth(n)-1=k-1 \neq 0$, so:
$$\FG(n)=\flip(G(\nn))=1+F_{k+2}-G(\nn)$$
\end{enumerate}
All in all:
\begin{align*}
n+1-\FG(1+\FG(n-1)) & = n+1-\FG(p) \\
                    & = n+1-(1+F_{k+1}-G(\flip(p))) \\
                    & = n-F_{k+1}+G(G(\nn+1)-1)) \\
                    & = n-F_{k+1}+(\nn-G(\nn)) \\
                    & = n-F_{k+1}+(1+F_{k+3}-n)-G(\nn) \\
                    & = 1+F_{k+2}-G(\nn) \\
                    & = \FG(n)
\end{align*}
\item We now consider the case where $\depth(n-1)\neq \depth(n)$.
So $n$ is the least number of depth $k$, hence $n=1+F_{k+1}$.
In this case:
\begin{align*}
1+n-\FG(1+\FG(n-1)) & = 2+F_{k+1}-\FG(1+\FG(F_{k+1})) \\
                    & = 2+F_{k+1}-\FG(1+F_{k}) \\
                    & = 2+F_{k+1}-(1+F_{k-1}) \\
                    & = 1+F_{k} \\
                    & = \FG(n)
\end{align*}
\item The last case to consider is $\depth(n-1)=\depth(n)$ but
$\depth(\FG(n-1)+1)\neq \depth(\FG(n-1))$. As earlier, we prove
that $\depth(\FG(n-1))=\depth(n-1)-1=\depth(n)-1=k-1$. So
$\FG(n-1)$ is the greatest number of depth $k-1$, hence
$\FG(n-1)=F_{k+1}$. Since $\FG(F_{k+2})$ is also $F_{k+1}$, then $n-1$
and $F_{k+2}$ are at a distance of 0 or 1. But we know that
$\depth(n)=k$, hence $n \le F_{k+2}$, so $n-1 < F_{k+2}$.
Finally $n=F_{k+2}$ and:
\begin{align*}
1+n-\FG(1+\FG(n-1)) & = 1+F_{k+2}-\FG(1+F_{k+1}) \\
                    & = 1+F_{k+2}-(1+F_{k}) \\
                    & = F_{k+1} \\
                    & = \FG(n)
\end{align*}
\end{itemize}

Finally, if we consider another function $F$ satisfying
the same recursive equation, as well as the same initial values
for $n\le 3$, then we prove by strong induction over $n$ that
$\forall n, F(n)=\FG(n)$. This is clear for $n\le 3$. Consider now
some $n>3$, and assume that $F(k)=\FG(k)$ for all $k<n$.
In particular $F(n-1)=\FG(n-1)$ by induction hypothesis for
$n-1<n$.
Moreover $n-1>1$ hence $\FG(n-1) < n-1$ hence $\FG(n-1)+1<n$.
So we could use a second induction hypothesis at this position:
$F(\FG(n-1)+1)=\FG(\FG(n-1)+1)$. This combined with the first
induction hypothesis above gives $F(F(n-1)+1)=\FG(\FG(n-1)+1)$
and finally $F(n)=\FG(n)$ thanks to the recursive equations
for $F$ and $\FG$.
\end{proof}

\subsection{The $\FG$ tree}

Since $\FG$ has been obtained as the mirror of $G$, we already
know its shape: it's the mirror of the shape of $G$. We'll
nonetheless be slightly more precise here. The proofs given
in this section will be deliberately sketchy, please consult
theorem {\tt unary\_flip} and alii in file \doclab{FlipG}{unary\_flip}\ for
more detailed and rigorous justifications.

First, a node
$n$ has a child $p$ in the tree $\FG$ is and only if the
node $\flip(n)$ has a child $\flip(p)$ in the tree $G$.
Indeed, $\FG(p)=n$ iff $\flip(G(\flip(p)))=n$ iff
$G(\flip(p))=\flip(n)$. Moreover, a rightmost child in $\FG$
corresponds by $\flip$ with a leftmost child in $G$ and vice-versa.
Indeed, if $p$ and $n$ aren't on the borders of the trees,
then the next node will become the previous by flip, and vice-versa
(see theorem \ref{flipprops}). And if $p$ and/or $n$ are on the
border, they are Fibonacci numbers or successors of Fibonacci
numbers, and we check these cases directly.

Similarly, the arity of $n$ in $\FG$ is equal to the arity of
$\flip(n)$ in $G$.
For instance, if we take a unary node $n$ and its unique
child $p$ in $\FG$, this means that $\FG(p+1)=n+1$ and
$\FG(p-1)=n-1$. In the most general case $\flip$ will lead
to similar properties about $\flip(p)$ and $\flip(n)$ in $G$,
hence the fact that $\flip(n)$ is unary in $G$. And we handle
particular cases about Fibonacci numbers on the border as usual.

\begin{theorem}
For all $n>1$, $\flip(\flip(n)+G(\flip(n)))$, which could also be
written $\flip(\flip(n)+\flip(\FG(n)))$, is the leftmost child
of $n$ in the $\FG$ tree. 
\end{theorem}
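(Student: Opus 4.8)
The plan is to reduce the claim to the corresponding fact about the $G$ tree through the $flip$ correspondence established at the start of this subsection. First I would set $m = flip(n)$ and observe that, since $n>1$, the second point of Theorem~\ref{flipprops} gives $m = flip(n) > 1$; hence $m$ is a genuine node of the $G$ tree and possesses a well-defined rightmost child. By Theorem~\ref{Gonto} together with the remark immediately following it, the largest antecedent of $m$ by $G$ — that is, the rightmost child of $m$ in the $G$ tree — is exactly $m + G(m) = flip(n) + G(flip(n))$.

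Next I would invoke the two correspondences recalled just before the statement: a node $p$ is a child of $n$ in the $\FG$ tree if and only if $flip(p)$ is a child of $flip(n)$ in the $G$ tree, and moreover this correspondence reverses the left-to-right ordering, so that a rightmost child in $G$ becomes a leftmost child in $\FG$ under $flip$. Applying this to the rightmost child $m + G(m)$ of $flip(n)$ in $G$, its $flip$-image $flip(flip(n) + G(flip(n)))$ is therefore the leftmost child of $flip(m) = n$ in the $\FG$ tree, which is precisely the asserted formula.

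The alternative form then follows by a one-line rewriting: applying $flip$ to the defining equation $\FG(n) = flip(G(flip(n)))$ and using involutivity (Theorem~\ref{flipprops}) yields $G(flip(n)) = flip(\FG(n))$, so that $flip(n) + G(flip(n)) = flip(n) + flip(\FG(n))$ and the two expressions inside the outer $flip$ coincide.

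I expect the only delicate point to be the justification of the left-right exchange in the $flip$ correspondence, which is exactly where the boundary nodes intervene. Away from the borders of the trees, the ``node immediately to the right'' becomes the ``node immediately to the left'' under $flip$, because $flip$ preserves depth and decreases by one as $n$ increases within a fixed depth (points 1, 5 and 6 of Theorem~\ref{flipprops}). When $m$ or its rightmost child sits on a border, however, it is a Fibonacci number or a successor of one, and there the ordering claim must be checked directly using the explicit values $flip(1+F_k)=F_{k+1}$ and $flip(F_{k+2}) = 1+F_{k+1}$. These are the same boundary verifications already signalled in the informal discussion preceding the statement, so I would discharge them case by case rather than seek a uniform argument.
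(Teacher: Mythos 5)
Your proposal is correct and follows essentially the same route as the paper: it reduces the claim to the fact that $flip(n)+G(flip(n))$ is the rightmost child of $flip(n)$ in the $G$ tree (the remark after Theorem~\ref{Gonto}) and then transports it through the child/ordering correspondence of the preceding paragraph, with the same boundary-case caveats. Your additional justification of the alternative form via involutivity is a harmless elaboration of what the paper leaves implicit.
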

\begin{proof}
See the previous paragraph: $\flip(n)+G(\flip(n))$ is rightmost
child of $\flip(n)$ in $G$, hence the result about $\FG$.
\end{proof}

\begin{theorem}
For all $n>1$, $n-1+\FG(n+1)$ is the rightmost child of $n$ in the
$\FG$ tree.
\end{theorem}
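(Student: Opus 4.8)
The plan is to mirror the proof of the preceding theorem on leftmost children. The prose above already records that a rightmost child in $\FG$ corresponds under $flip$ to a leftmost child in $G$; hence the rightmost child of $n$ in $\FG$ is $flip(q)$, where $q$ is the leftmost child of $\nn := flip(n)$ in the $G$ tree. I first need a formula for that leftmost child. I would show that for any $m>1$ the leftmost child of $m$ in $G$ is $m+G(m-1)$: indeed $m+G(m-1)-1 = (m-1)+G(m-1)$ is the rightmost child of $m-1$ by theorem \ref{Gonto}, so $m+G(m-1)$ is the next node in the breadth-first order and satisfies both $G(m+G(m-1))=m$ and $G(m+G(m-1)-1)=m-1$, which is exactly the characterization of a leftmost child used in the proof of theorem \ref{Gnodes}. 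The goal then reduces to the single identity $flip(\nn+G(\nn-1)) = n-1+\FG(n+1)$.

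Writing $k=depth(n)\ge 1$, I would split on whether $n$ sits at the right border of its depth. In the main case $n<F_{k+2}$, the node $n+1$ keeps depth $k$, so $flip(n+1)=\nn-1$ by theorem \ref{flipprops}, and $\nn-1$ itself keeps depth $k$; unfolding $\FG(n+1)=flip(G(flip(n+1)))$ and applying $flip$ to the depth-$(k-1)$ value $G(\nn-1)$ gives $\FG(n+1)=1+F_{k+2}-G(\nn-1)$, so that $n-1+\FG(n+1)=n+F_{k+2}-G(\nn-1)$. On the other side, $q=\nn+G(\nn-1)$ has depth $k+1$, hence $flip(q)=1+F_{k+4}-\nn-G(\nn-1)$; substituting $\nn=1+F_{k+3}-n$ and using $F_{k+4}-F_{k+3}=F_{k+2}$ collapses this to the same $n+F_{k+2}-G(\nn-1)$, which settles the main case.

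The border case $n=F_{k+2}$ needs separate handling, since there $\nn=1+F_{k+1}$ is leftmost at its depth and $\nn-1=F_{k+1}$ drops to depth $k-1$. In that case the leftmost child of $\nn$ is $q=\nn+G(F_{k+1})=1+F_{k+1}+F_k=1+F_{k+2}$ using theorem \ref{Gfib}, so $flip(q)=F_{k+3}$ by the remark following theorem \ref{flipprops}; and on the other side $n-1+\FG(n+1)=F_{k+2}-1+\FG(1+F_{k+2})=F_{k+2}-1+(1+F_{k+1})=F_{k+3}$ by theorem \ref{FGprops}. The two sides agree, completing the proof. I expect the main obstacle to be not the computations themselves but the careful tracking of when successors and predecessors stay within the same depth; the whole argument hinges on distinguishing the border node $F_{k+2}$ from the interior nodes, exactly as in the $\FG$ equation proof of theorem \ref{FGeqn}.
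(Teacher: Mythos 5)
Your proof is correct, but it takes a genuinely different route from the paper's. The paper argues entirely inside the $\FG$ tree, mirroring the proof of theorem \ref{Gonto}: since every $n>1$ has at least one and at most two antecedents under $\FG$, let $k$ be the largest one, so that $\FG(k)=n$ and $\FG(k+1)=n+1$; plugging these into the recursive equation of theorem \ref{FGeqn}, $\FG(k+1)=k+2-\FG(\FG(k)+1)$, immediately yields $k=n-1+\FG(n+1)$ --- three lines, no case split, no flip bookkeeping. You instead transport the problem through $flip$, exactly as the paper does for the preceding leftmost-child theorem: you first establish the auxiliary fact that the leftmost child of $m$ in the $G$ tree is $m+G(m-1)$ (a correct consequence of theorem \ref{Gonto} together with the characterization $G(p)=m$, $G(p-1)=m-1$ from the proof of theorem \ref{Gnodes}), and then verify the identity $flip\bigl(flip(n)+G(flip(n)-1)\bigr)=n-1+\FG(n+1)$ by depth bookkeeping, splitting on $n<F_{k+2}$ versus $n=F_{k+2}$. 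Both cases check out; the one point you leave implicit --- applying the flip formula to $G(flip(n)-1)$ requires its depth $k-1$ to be nonzero --- is harmless, since interior nodes $1+F_{k+1}\le n<F_{k+2}$ only exist for $k\ge 3$. The trade-off is clear: the paper's proof is shorter and uniform but leans on the hard-won equation of theorem \ref{FGeqn}, whereas yours is independent of that theorem --- it needs only the flip correspondence, theorem \ref{Gonto}, and theorems \ref{flipprops} and \ref{FGprops} --- so it would remain valid at an earlier stage of the development, and it incidentally produces the formula $flip\bigl(flip(n)+G(flip(n)-1)\bigr)$ for the rightmost child, the exact mirror image of the paper's leftmost-child formula.
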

\begin{proof}
We already know that all nodes $n$ in the $\FG$ tree have at
least one antecedent, and no more than two.
Take $n>1$, and let $k$ be the largest of its antecedents by $\FG$.
Hence $\FG(k)=n$ and $\FG(k+1)\neq n$, leading to $\FG(k+1)=n+1$.
If we re-inject this into the previous recursive equation
$\FG(k+1) = k+2 - \FG(\FG(k)+1)$,
we obtain that $n+1 = k+2-\FG(n+1)$ hence $k=n-1+\FG(n+1)$.
\end{proof}

Of course, for unary nodes, there is only one child, hence the
leftmost and rightmost children given above coincide. Otherwise,
for binary nodes, they are apart by 1.

\begin{theorem}
In the $\FG$ tree, a binary node has a right child which is also binary,
and a left child which is unary, while the unique child of a unary
node is itself binary.
\end{theorem}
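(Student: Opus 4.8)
The plan is to transfer Theorem~\ref{Gnodes} from the $G$ tree to the $\FG$ tree using the mirror correspondence established in the preceding paragraphs. The two facts I would lean on are that the arity of any node $n$ in $\FG$ equals the arity of $flip(n)$ in $G$, and that $flip$ exchanges the roles of leftmost and rightmost children: the rightmost child of $n$ in $\FG$ corresponds under $flip$ to the leftmost child of $flip(n)$ in $G$, and symmetrically the leftmost child of $n$ in $\FG$ corresponds to the rightmost child of $flip(n)$ in $G$. Since $flip$ is involutive, this correspondence is a genuine bijection between the two trees that swaps left and right and preserves arity.

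First I would treat the unary case. If $n$ is a unary node of $\FG$, then $flip(n)$ is unary in $G$, and by Theorem~\ref{Gnodes} its unique child in $G$ is binary. Applying $flip$ sends that child back to the unique child of $n$ in $\FG$; since $flip$ preserves arity, this child is binary, as required.

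Next I would treat the binary case. If $n$ is binary in $\FG$, then $flip(n)$ is binary in $G$, so by Theorem~\ref{Gnodes} its left child is binary and its right child is unary. Under the flip correspondence, the left child of $flip(n)$ in $G$ becomes the \emph{right} child of $n$ in $\FG$, while the right child of $flip(n)$ becomes the \emph{left} child of $n$. Because arity is preserved, the right child of $n$ in $\FG$ is binary and the left child is unary, which is exactly the mirrored conclusion.

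The main obstacle is the same caveat that qualified the correspondence itself: the clean ``leftmost becomes rightmost'' behaviour of $flip$ holds cleanly only for nodes that do not sit on the borders of a depth level. For nodes that are Fibonacci numbers or successors of Fibonacci numbers, the flip of a node's successor (resp. predecessor) need not be the predecessor (resp. successor) of the flipped node, so these boundary positions must be verified directly. For that I would use the explicit child formulas $n-1+\FG(n+1)$ for the rightmost child and $flip(flip(n)+G(flip(n)))$ for the leftmost child, together with the identities $\FG(F_{k+1})=F_k$ and $\FG(1+F_{k+1})=1+F_k$ from Theorem~\ref{FGprops}, to confirm that the arities come out as stated in these edge cases as well.
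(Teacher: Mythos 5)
Your proposal is correct and matches the paper's own argument: the paper proves this theorem simply as the ``flipped version of Theorem~\ref{Gnodes}'', relying on exactly the correspondence you invoke (arity of $n$ in $\FG$ equals arity of $flip(n)$ in $G$, with leftmost and rightmost children exchanged under $flip$, and Fibonacci-boundary cases checked directly). Your write-up is essentially a more explicit rendering of the same transfer argument.
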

\begin{proof}
Flipped version of theorem \ref{Gnodes}.
\end{proof}

\subsection{Comparison between $\FG$ and $G$}

We already noticed earlier that $\FG$ and $G$ produce very similar
answers. Let us study this property closely now.

\begin{theorem}\label{comp-fg-g}
For all $n$, we have $\FG(n)=1+G(n)$ whenever $n$ is 3-odd,
and $\FG(n)=G(n)$ otherwise.
\end{theorem}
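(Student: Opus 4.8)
The plan is to proceed by strong induction on $n$, proving the combined statement that $\FG(n) = G(n) + [n\ \text{3-odd}]$, where $[P]$ denotes $1$ if $P$ holds and $0$ otherwise. The base cases $n \le 6$ are settled by direct computation (recall that $7$ is the first $3$-odd number, and one checks $\FG(n)=G(n)$ for $n \le 6$).

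For the inductive step with $n > 3$, I would feed the induction hypothesis into the two recursive equations simultaneously. Using Theorem \ref{FGeqn} for $\FG$ and the defining equation $G(n) = n - G(G(n-1))$ for $G$, together with the induction hypothesis applied at $n-1$ (giving $\FG(n-1) = G(n-1) + a$ with $a = [n-1\ \text{3-odd}]$) and at the argument $1+\FG(n-1)$ (which is $< n$, since $\FG(n-1) \le n-2$), one obtains, writing $m = G(n-1)$,
$$\FG(n) - G(n) = 1 - b - \bigl(G(1+m+a) - G(m)\bigr), \qquad b = [\,1+\FG(n-1)\ \text{3-odd}\,].$$
The bracketed difference telescopes into one or two increments of $\Delta G$, each of which is $0$ or $1$ and, by Theorem \ref{Gclass1}, equals $0$ precisely when the corresponding argument has lowest rank $2$. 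So the identity reduces to a finite check that $1 - b$ minus these $\Delta G$ terms equals $[n\ \text{3-odd}]$.

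To discharge this check I would case on the Fibonacci classification of $n-1$ (its lowest rank, and for rank $3$ its $3$-even/$3$-odd type), which in turn fixes $a$ and the classification of $n$ via the successor and predecessor theorems \ref{fibsucc}--\ref{fibpred} together with \ref{threeevenodd}. In each case the classification theorems \ref{Gclass2}, \ref{Glow} and \ref{Gthree} pin down the lowest ranks of $m = G(n-1)$ and of $m+1$, hence the values of the $\Delta G$ terms, while the $3$-odd status $b$ of $1+\FG(n-1) = 1+m+a$ is read off from the same decomposition data (using Theorem \ref{threeevenodd}, which characterizes $3$-oddness through $low(\cdot)$ and $low(\cdot-2)$). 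The arithmetic then collapses to the desired indicator.

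The main obstacle will be precisely this bookkeeping: correctly propagating the decomposition shape through the nested applications of $G$ and $\FG$, and in particular determining the $3$-odd status of the inner argument $1+\FG(n-1)$, since a single misclassified lowest rank would flip a $\Delta G$ term and break the count. Because I use the all-$n>3$ recursive equation \ref{FGeqn} rather than the flip-based form, I expect to avoid separate boundary cases at Fibonacci numbers, concentrating the entire difficulty in the decomposition case analysis, for which the catalogue of lemmas in Theorems \ref{fibsucc} through \ref{Gthree} is exactly the toolkit required.
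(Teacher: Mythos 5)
You have reconstructed the paper's own proof skeleton: strong induction using Theorem \ref{FGeqn} for $\FG$ against the defining equation of $G$, induction hypotheses at $n-1$ and at $1+\FG(n-1)$, and a case analysis powered by Theorems \ref{fibsucc}, \ref{fibpred}, \ref{threeevenodd}, \ref{Gclass1}, \ref{Gclass2}, \ref{Glow} and \ref{Gthree}; your algebraic reduction of the goal to $1-b-\bigl(G(1+m+a)-G(m)\bigr)=[n\ \text{3-odd}]$ is correct. The genuine gap is the claim that casing on the classification of $n-1$ ``fixes \dots{} the classification of $n$'' via Theorems \ref{fibsucc}--\ref{fibpred}. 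That fails in exactly one case, and it is the case that matters: when $low(n-1)=2$, Theorem \ref{fibsucc} only yields that $low(n)$ is odd, which is compatible with all three classifications of $n$. Concretely, $low(4)=low(6)=low(9)=2$, yet $5=F_5$ is neither 3-odd nor 3-even, $7=F_3+F_5$ is 3-odd, and $10=F_3+F_6$ is 3-even. So in this case your case hypothesis pins down neither the target value $[n\ \text{3-odd}]$ nor $b$: here $G(n)=G(n-1)$ by Theorem \ref{Gclass1}, so $1+m+a=G(n)+1$, whose 3-odd status depends, via Theorems \ref{Gthree} and \ref{Gclass2}, precisely on which of the three classifications $n$ has. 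The ``finite check'' is therefore not well-posed as stated, and this is no corner case: every $n$ with $\FG(n)\neq G(n)$ is 3-odd, hence (by Theorem \ref{fibpred}) lands in the case $low(n-1)=2$.

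The repair stays inside your toolkit: within $low(n-1)=2$, split further on the second-lowest rank of $n-1$ (equivalently, on whether $n$ is 3-odd, 3-even, or has $low(n)\ge 5$), after which $b$ and the $\Delta G$ terms are determined as you describe. Your remaining cases ($n-1$ 3-odd, $n-1$ 3-even, $low(n-1)=4$, $low(n-1)>4$ odd or even) do go through with the cited lemmas. Note the symmetry with the paper's organization: it cases on the classification of $n$ instead, for which Theorem \ref{fibpred} determines the status of $n-1$ in every case except $low(n)=4$, where the paper explicitly digs into the second-lowest rank of $n$. Neither orientation avoids a second-level split; your plan as written silently assumes yours does.
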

\begin{proof}
We proceed by strong induction over $n$.
When $n\le 3$, $n$ is never \mbox{3-odd}, and we indeed have
$\FG(n)=G(n)$. We now consider $n>3$, and assume that
the result is true at all positions strictly less than $n$.
For comparing $\FG(n)$ and $G(n)$ we use their corresponding
recursive equations: when $n$ is 3-odd we need to prove that
$\FG(\FG(n-1)+1)=G(G(n-1))$, and otherwise we need to establish
that $\FG(\FG(n-1)+1)=1+G(G(n-1))$. So we'll need
induction hypotheses (IH) for $n-1<n$ and for $\FG(n-1)+1$
(which is indeed strictly less than $n$:
since $1<n-1$, we have $\FG(n-1)<n-1$). But the exact equations
given by these two IH will depend on the
status of $n-1$ and $\FG(n-1)+1$ : are these numbers 3-odd or not ?
For determining that, we'll consider the Fibonacci decomposition
of $n$ and study its classification.
\begin{itemize}
\item Case $low(n)=2$. Hence $n$ isn't 3-odd and we try to prove
  $\FG(\FG(n-1)+1)=1+G(G(n-1))$. Theorem \ref{fibpred} implies that
  $low(n-1)>3$, and in particular $n-1$ isn't 3-odd.
  So the first IH is $\FG(n-1)=G(n-1)$. By theorem
  \ref{Gclass1}, we also know that $G(n-1)=G(n)-1$.
  So $\FG(n-1)+1=G(n-1)+1=G(n)$
  By theorem \ref{Gclass2}, we know that $G(n)$ has an even lowest
  rank, so it cannot be 3-odd, and the second IH is
  $\FG(\FG(n-1)+1)=G(\FG(n-1)+1)$.
  Since $low(G(n))$ is even, we can use theorem \ref{Gclass1} once
  more: $G(G(n)-1) = G(G(n))-1$.
  Finally: $\FG(\FG(n-1)+1)=G(\FG(n-1)+1)=G(G(n))=1+G(G(n-1))$.
\item Case $n$ 3-odd. We're trying to prove
  $\FG(\FG(n-1)+1)=G(G(n-1))$. Theorem \ref{fibpred} implies that
  $low(n-1) = 2$, and in particular $n-1$ isn't 3-odd.
  So the first IH is $\FG(n-1)=G(n-1)$.
  By theorem \ref{Gclass1}, we also know that $G(n-1)=G(n)$.
  Moreover theorem \ref{Gthree} shows that $\FG(n-1)+1 = G(n)+1$
  cannot be 3-odd.
  So the second IH gives $\FG(\FG(n-1)+1) = G(\FG(n-1)+1)$.
  Now, $low(G(n))=2$ by theorem \ref{Gclass2} so $G(G(n)+1)=G(G(n)$
  by theorem \ref{Gclass1}, or equivalently $G(\FG(n-1)+1)=G(G(n-1))$,
  hence the desired equation.
\item Case $n$ 3-even. As in all the remaining cases, we're now trying here
  to prove $\FG(\FG(n-1)+1)=1+G(G(n-1))$.
  This case is very similar to the previous one until the point
  where we study $G(n)+1$ which is now 3-odd (still thanks to theorem
  \ref{Gthree}).
  So the second IH gives now
  $\FG(\FG(n-1)+1) = 1+G(\FG(n-1)+1)$. And we conclude just as before
  by ensuring for the same reasons that $G(\FG(n-1)+1)=G(G(n-1))$.
\item Case $low(n)=4$.
  Since 4 is even, we also have $G(n-1)=G(n)-1$.
  Hence $low(G(n))=3$ by theorem \ref{Glow}, and $G(G(n)-1)=G(G(n))$
  and $G(G(n)+1)=1+G(G(n))$, both by theorem \ref{Gclass1}.
  Finally $1+G(G(n-1)) = 1+G(G(n)) = G(G(n)+1)$.
  Now, to determine whether $n-1$ is 3-odd,
  we need to look deeper in the canonical decomposition of
  $n = F_4+F_k+\fibrest$.
  \begin{itemize}
  \item If the second lowest rank $k$ is odd, then
    $n-1 = F_3+F_k+\fibrest$ is hence 3-odd, and the first IH is
    $\FG(n-1)=1+G(n-1)$. So
    $\FG(n-1)+1=G(n)+1$, and this number has an even lowest rank
    (theorem \ref{Gclass2}), it cannot be 3-odd, and the second IH
    is:
    $\FG(\FG(n-1)+1)=G(\FG(n-1)+1)$. And this is known to be equal
    to $G(G(n)+1)=1+G(G(n-1))$.
  \item Otherwise $k$ is even and $n-1$ is 3-even
    and hence not 3-odd, so the first IH is $\FG(n-1)=G(n-1)$.
    Moreover, since $n-1$ is 3-even then $\FG(n-1)+1=G(n-1)+1$ is 3-odd by
    theorem \ref{Gthree}. So the second IH is:
    $\FG(\FG(n-1)+1)=1+G(\FG(n-1)+1)$. And this is known to be equal
    to $1+G(G(n-1)+1)=1+G(G(n))=1+G(G(n-1))$.
  \end{itemize}
\item Case $low(n)>4$ and odd.
  By theorem \ref{fibpred}, $low(n-1)=2$ hence the first IH is
  $\FG(n-1)=G(n-1)$. We also know that $G(n-1)=G(n)$ by theorem
  \ref{Gclass1}, and that $low(\FG(n-1)+1)=low(G(n)+1)$ is even
  by theorem \ref{Gclass2}. The second IH is hence:
  $\FG(\FG(n-1)+1) = G(\FG(n-1)+1)$. Finally
  $G(G(n)+1)=1+G(G(n))$ since $low(G(n))=2\neq 1$.
\item Case $low(n)>4$ and even.
  In this case, $n-1$ is 3-odd (theorem \ref{threeevenodd}), so
  the first IH is $\FG(n-1)=1+G(n-1)$. We also know that
  $G(n-1)=G(n)-1$ by theorem \ref{Gclass1}.
  By theorem \ref{Glow} we know that $low(G(n))=low(n)-1 > 3$, so
  by theorem \ref{fibsucc} $low(G(n)+1)=2$, and
  $\FG(n-1)+1=1+G(n)$ cannot be 3-odd : the second IH is hence
  $\FG(\FG(n-1)+1)=G(\FG(n-1)+1)$. Moreover $low(G(n))$ is
  also known to be odd, so 
  by theorem \ref{Gclass1} we have $G(G(n-1))=G(G(n)-1)=G(G(n))$.
  The final step is
  $G(1+G(n))=1+G(G(n))$ also by theorem \ref{Gclass1} (since $low(G(n))\neq
  2$).
\end{itemize}
\end{proof}

As an immediate consequence, $\FG$ is always greater or equal
than $G$, but never more than $G+1$. And we've already studied
the distance between 3-odd numbers, which is always 5 or 8,
while the first 3-odd number is 7. So $\FG$ and $G$ are actually
equal more than 80\% of the time.

\subsection{$\FG$ and its ``derivative'' $\Delta\FG$}

We consider now the ``derivative'' $\Delta\FG$ of $\FG$, defined via
$\Delta \FG(n) = \FG(n+1)-\FG(n)$. This study will be quite
similar to the corresponding section \ref{deltaG} for $G$.
We already know from theorem
\ref{FGprops} that the output of $\Delta\FG$ is always either 0 or 1.

\begin{theorem}\label{FGdelta}
  For all $n>2$, $\Delta\FG(n+1) = 1 - \Delta\FG(n).\Delta\FG(\FG(n+1))$.
\end{theorem}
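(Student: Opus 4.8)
The plan is to mirror the proof of Theorem \ref{Gdelta}, replacing the recursive equation for $G$ with the one for $\FG$ established in Theorem \ref{FGeqn}. Since $\Delta\FG(n)\in\{0,1\}$ by Theorem \ref{FGprops}, I would split on the value of $\Delta\FG(n)$, which is exactly the two-valued factor appearing on the right-hand side.

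First, the degenerate case $\Delta\FG(n)=0$. Here the right-hand side is simply $1$, so it suffices to show $\Delta\FG(n+1)=1$. But $\Delta\FG(n)=0$ means $\FG(n+1)=\FG(n)$, and the ``stagnation is immediately followed by a growth'' property of Theorem \ref{FGprops}, applied at index $n+1$, yields $\FG(n+2)=\FG(n+1)+1$, i.e. $\Delta\FG(n+1)=1$, as required.

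Second, the main case $\Delta\FG(n)=1$, i.e. $\FG(n+1)=\FG(n)+1$. Since $n>2$, both $n+1>3$ and $n+2>3$, so I may invoke the recursive equation of Theorem \ref{FGeqn} at indices $n+1$ and $n+2$:
\begin{align*}
\FG(n+2) &= n+3 - \FG(1+\FG(n+1)),\\
\FG(n+1) &= n+2 - \FG(1+\FG(n)).
\end{align*}
Subtracting gives $\Delta\FG(n+1) = 1 - \bigl(\FG(1+\FG(n+1))-\FG(1+\FG(n))\bigr)$. Now I use $\FG(n+1)=\FG(n)+1$ to rewrite the first inner argument as $1+\FG(n+1) = (1+\FG(n))+1$, so that the bracketed difference becomes exactly $\Delta\FG(1+\FG(n))$. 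Finally $1+\FG(n)=\FG(n+1)$, whence this difference equals $\Delta\FG(\FG(n+1))$, and $\Delta\FG(n+1)=1-\Delta\FG(\FG(n+1)) = 1 - \Delta\FG(n)\cdot\Delta\FG(\FG(n+1))$ since $\Delta\FG(n)=1$.

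The computations are entirely routine; the only points deserving care are the domain conditions, which is precisely why the hypothesis is $n>2$ rather than $n>0$: the equation of Theorem \ref{FGeqn} is only available for arguments strictly above $3$, so applying it to $\FG(n+1)$ forces $n+1>3$. I expect no genuine obstacle beyond tracking these thresholds and performing the substitution $1+\FG(n)=\FG(n+1)$ that converts the nested difference into the desired $\Delta\FG(\FG(n+1))$.
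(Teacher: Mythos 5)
Your proposal is correct and follows essentially the same route as the paper's own proof: the same case split on $\Delta\FG(n)\in\{0,1\}$, the degenerate case handled by the stagnation-followed-by-growth property, and the main case obtained by subtracting the two instances of the recursive equation of Theorem \ref{FGeqn} at indices $n+1$ and $n+2$ and then substituting $1+\FG(n)=\FG(n+1)$. The domain bookkeeping ($n>2$ so that both indices exceed $3$) also matches the paper exactly.
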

\begin{proof}
We already know that $\Delta\FG(n)=0$ implies $\Delta\FG(n+1)=1$:
we cannot have $\FG(n)=\FG(n+1)=\FG(n+2)$.
Consider now some $n>2$ such that
$\FG(n+1)-\FG(n)=1$. By using the recursive equation of $\FG$
for $n+1>3$ and $n+2>3$, we have as expected
\begin{align*}
 \Delta\FG(n+1) & = \FG(n+2)-\FG(n+1) \\
                & =(n+3-\FG(\FG(n+1)+1))-(n+2-\FG(\FG(n)+1)) \\
                & = 1 - (\FG(\FG(n+1)+1)-\FG(\FG(n)+1) \\
                & = 1 - (\FG(\FG(n+1)+1)-\FG(\FG(n+1)) \\
                & = 1 - \Delta\FG(\FG(n+1))
\end{align*}
\end{proof}

Note: when compared with theorem \ref{Gdelta} about $\Delta G$, 
the equation
above looks really similar, but has a different inner call
($\Delta\FG(\FG(n+1))$ instead of $\Delta G(G(n))$), and doesn't
hold for $n=2$.

For $n>2$, this equation provides a way to express $\FG(n+2)$ in terms of
$\FG(n+1)$ and $\FG(n)$ and $\FG(\FG(n+1))$ and $\FG(\FG(n+1)+1)$.
For $n>1$, we know that $G(n+1)<n+1$ hence $\FG(n+1)+1<n+2$.
It is also clear that $n+1$ and $n$ and $\FG(n+1)$ are all
strictly less than $n+2$.
So we could use this equation as an alternative way to define
recursively $\FG$,
alongside  initial equations $\FG(0)=0$ and $\FG(1)=\FG(2)=1$ and
$\FG(3)=2$ and $\FG(4)=3$.
We proved in Coq that $\FG$ is indeed the unique function to
satisfy these equations (see {\tt FD\_unique} and
{\tt  fg\_implements\_FD}).

\subsection{An alternative recursive equation for $\FG$}

During the search for an algebraic definition of $\FG$, we
first discovered and proved correct the following result,
which unfortunately doesn't uniquely characterize $\FG$.
We mention it here nonetheless, for completeness sake.

\begin{theorem}
For all $n>3$ we have $\FG(n-1) +\FG(\FG(n)) = n$.
\end{theorem}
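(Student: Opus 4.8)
The plan is to mirror the strategy used for Theorem \ref{FGeqn}: push everything through $flip$ so that the identity becomes a statement about $G$, where it collapses to the very defining recursion of $G$. Write $\overline{n}=flip(n)$ and let $k=depth(n)$, which is at least $3$ since $n>3$; thus $1+F_{k+1}\le n\le F_{k+2}$ and, by Theorem \ref{flipprops}, $\overline{n}=1+F_{k+3}-n$ also has depth $k$. First I would unfold the outer composition: from $\FG(n)=flip(G(\overline{n}))$ and the involutivity of $flip$ one obtains $\FG(\FG(n))=flip\big(G(G(\overline{n}))\big)$. Since $\overline{n}>1$ and $k\ge 3$, the value $G(\overline{n})$ has depth $k-1\ge 2$ and $G(G(\overline{n}))$ has depth $k-2\ge 1$, so every depth side-condition needed to re-apply the explicit formula for $flip$ is available.

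Next I would split on whether $depth(n-1)=depth(n)$. In the general case $depth(n-1)=k$, the last parts of Theorem \ref{flipprops} give $flip(n-1)=\overline{n}+1$, hence $\FG(n-1)=flip(G(\overline{n}+1))=1+F_{k+2}-G(\overline{n}+1)$ (the flip formula applies because $G(\overline{n}+1)$ has depth $k-1\neq 0$). Likewise $\FG(\FG(n))=1+F_{k+1}-G(G(\overline{n}))$, using that $G(G(\overline{n}))$ has depth $k-2\neq 0$. Adding the two, and using $F_{k+1}+F_{k+2}=F_{k+3}$ together with $F_{k+3}=\overline{n}+n-1$, I would arrive at
\[
\FG(n-1)+\FG(\FG(n)) = 2+F_{k+3}-G(\overline{n}+1)-G(G(\overline{n})) = n+(\overline{n}+1)-\big(G(\overline{n}+1)+G(G(\overline{n}))\big).
\]
This equals $n$ exactly when $G(\overline{n}+1)+G(G(\overline{n}))=\overline{n}+1$, which is nothing but the defining equation $G(m)=m-G(G(m-1))$ of Theorem \ref{defG} instantiated at $m=\overline{n}+1$. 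So the general case closes immediately.

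Finally I would treat the boundary case $depth(n-1)\neq depth(n)$, which forces $n$ to be the least number of its level, i.e. $n=1+F_{k+1}$ and $n-1=F_{k+1}$. Here a direct computation via Theorem \ref{FGprops} suffices: $\FG(n-1)=\FG(F_{k+1})=F_k$, $\FG(n)=\FG(1+F_{k+1})=1+F_k$, and $\FG(\FG(n))=\FG(1+F_k)=1+F_{k-1}$ (all legitimate since $k\ge 3$), whose sum is $F_k+1+F_{k-1}=1+F_{k+1}=n$. The main obstacle is not any single computation but the bookkeeping of the depth conditions attached to each use of $flip$, and checking that $n=1+F_{k+1}$ is genuinely the only configuration in which the convenient identity $flip(n-1)=\overline{n}+1$ fails; once that is confirmed, the two cases are exhaustive and the proof is complete.
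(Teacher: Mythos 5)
Your proposal is correct and follows essentially the same route as the paper's own proof: flip everything to the $G$ side so that the identity reduces to the original defining recursion $G(m)=m-G(G(m-1))$ instantiated at $m=\overline{n}+1$, with the same depth bookkeeping and the same boundary case $n=1+F_{k+1}$ handled by direct computation with Fibonacci values. The only difference is cosmetic ordering (the paper applies the $G$-recursion before the case split, you apply it after), so there is nothing substantive to change.
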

\begin{proof}
We proceed as for theorem \ref{FGeqn}, except that we
use internally the original recursive equation for $G$
instead of the alternative equation of theorem \ref{Galt}.
We take $n>3$, call $k$ its depth (which is greater or equal to 3),
and note $\nn = \flip(n)$. First:
$$\FG(\FG(n))=\flip(G(\flip(\flip(G(\nn)))))=\flip(G(G(\nn)))$$
The depth of $G(G(\nn))$ is $k-2\neq 0$, so the definition
of $\flip$ gives:
$$\FG(\FG(n)) = \flip(G(G(\nn))) = 1+F_{k+1}-G(G(\nn))$$
Now, thanks to the recursive definition of $G$ at $\nn+1$, we have
$$\FG(\FG(n)) = 1+F_{k+1}+G(\nn+1)-\nn-1$$
\begin{itemize}
\item If $\depth(n-1)=k$ then $\flip(n-1)=\nn+1$ and we
conclude by more uses of the $\flip$ definition:
$\depth(G(\nn+1))=\depth(G(\flip(n-1))=\depth(n-1)-1=k-1$, so:
$$\FG(n-1)=\flip(G(\flip(n-1)))=\flip(G(\nn+1))= 1+F_{k+2}-G(\nn+1)$$
And finally:
$$\FG(\FG(n)) = F_{k+1} + (1+F_{k+2}-\FG(n-1))-(1+F_{k+3}-n) = n-\FG(n-1)$$

\item If $\depth(n-1)\neq k$ then $n$ is the least number at
depth $k$, so $n=1+F_{k+1}$, and:
$$\FG(n-1) + \FG(\FG(n)) = \FG(F_{k+1})+\FG(\FG(1+F_{k+1})) =
F_{k}+(1+F_{k-1}) = n$$
\end{itemize}
\end{proof}

Note that the following function $f:\mathbb{N}\to\mathbb{N}$
also satisfies this equation, even with the same initial
values than $\FG$:
\begin{align*}
   f(0) & = 0 \\
   f(1) & = f(2)=1  \\
   f(3) & = 2 \\
   f(4) & =f(5) = 3 \\
   f(6) & = 5 \\
   f(7) & = 3 \\
 \forall n\le 4, ~ f(2n) & = n-2 \\
 \forall n\le 4, ~ f(2n+1) & = 4
\end{align*}
This function isn't monotone. We actually proved in Coq that
$\FG$ is the only monotone function that satisfies the previous
equation and the initial constraints $0\mapsto 0$, $1\mapsto 1$, $2\mapsto 1$, $3\mapsto
2$ (see {\tt alt\_mono\_unique} and {\tt alt\_mono\_is\_fg}).
We will not detail these proofs here, the key ingredient is to
prove that any monotone function satisfying these equations
will grow by at most 1 at a time.

\section{Conclusion}

The proofs for theorems \ref{FGeqn} and \ref{comp-fg-g} are
surprisingly tricky, all our attempts at simplifying them
have been rather unsuccessful. But there's probably still room
for improvements here, please let us know if you find or
encounter nicer proofs.

Perhaps using the definition of $G$
via real numbers could help shortening some proofs.
We proved in file \doc{Phi} this definition
$\forall n, G(n)=\lfloor (n+1)/\varphi\rfloor$ where $\varphi$
is the golden ratio $(1+\sqrt{5})/2$. But this has been done
quite late in our development, and we haven't tried to use this
fact for earlier proofs. Anyway,
relating this definition with the flipped function $\FG$
doesn't seem obvious.

Another approach might be to relate
$\FG$ more directly to some kind of Fibonacci decomposition.
Of course, now that theorem \ref{comp-fg-g} is proved, we
know that $\FG$ shifts the ranks of the Fibonacci decompositions
just as $G$, except for 3-odd numbers where a small $+1$
correction is needed. But could this fact be established more
directly ? For the moment, we are only aware of the following
``easy'' formulation of $\FG$ via decompositions:
if $n$ is written as $F_{k+2}-\fibrest$ where $k=\depth(n)$
and $\fibrest$ form a canonical decomposition of $F_{k+2}-n$, then
$\FG(n)=F_{k+1}-\Sigma F_{i-1}+\epsilon$, where $\epsilon=1$ whenever
the decomposition above includes $F_2$, and $\epsilon=0$
otherwise. But this statement didn't brought any new insights
for the proof of theorem \ref{comp-fg-g}, so we haven't formulated
it in Coq.

As possible extensions of this work, we might consider later
the other recursive functions proposed by Hofstadter :
\begin{itemize}
\item $H$ defined via $H(n)=n-H(H(H(n-1)))$, or its generalized
   version with an arbitrary number of sub-calls instead of 2
   for $G$ and 3 for $H$.
\item $\overline{H}$, the flipped version of $H$.
\item $M$ and $F$, the mutually recursive functions (``Male''
and ``Female'').
\end{itemize}
We've already done on paper a large part of the analysis of
$M$ and $F$, and should simply take the time to certify
it in Coq. The study of $H$ and $\overline{H}$ remains to be
done, it looks like a direct generalization of what we've
done here, but surprises are always possible.

\end{document}